\let\accentvec\vec
\let\vec\accentvec
\crefname{claim}{Claim}{Claims}
\crefname{figure}{Figure}{Figures}
\newcommand{\CR}{\textsc{Coloring Reconfiguration}\xspace}
\newcommand{\ST}{\textsc{Sliding Tokens}\xspace}
\newcommand{\dist}{\mathsf{dist}}
\newcommand{\cl}{\mathsf{cl}}
\newcommand{\far}{\mathsf{far}}
\renewcommand{\part}{\mathsf{part}}
\newcommand{\diam}{\mathsf{diam}}
\g@addto@macro\thmt@newtheorem@postdefinition{    \let\thmt@theoremdefiner\thmt@original@spnewtheorem
}
\let\doendproof\endproof
\renewcommand\endproof{~\hfill$\qed$\doendproof}
\renewcommand\paragraph{\@startsection{paragraph}{4}{\z@}                     {-12\p@ \@plus -4\p@ \@minus -4\p@}                     {-0.5em \@plus -0.22em \@minus -0.1em}                     {\normalfont\normalsize\bfseries}}
\begin{document}
\title{Distance Recoloring}
\author{Niranka~Banerjee\inst{1} \and
Christian~Engels\inst{2}\and
Duc A.~Hoang\inst{3}}
\institute{Mie University, Tsu, Japan
\\
\email{banerjee@eng.mie-u.ac.jp}
\and
National Institute of Informatics, Tokyo, Japan \\
\email{christian.engels@gmail.com}\and
VNU University of Science, Vietnam National University, Hanoi, Vietnam\\
\email{hoanganhduc@hus.edu.vn}
}

\maketitle\begin{abstract}
Reconfiguration problems ask whether one feasible solution can be transformed into another by a sequence of local moves while maintaining feasibility throughout. For integers $d \geq 1$ and $k \geq d+1$, the \textsc{Distance Coloring} problem asks if a given graph $G$ has a $(d, k)$-coloring, i.e., a coloring of the vertices of $G$ by $k$ colors such that any two vertices within distance $d$ from each other have different colors.

For ordinary proper colorings ($d=1$), the $k$-Coloring Reconfiguration problem is polynomial-time solvable for $k\le 3$ [Cereceda, van den Heuvel, and Johnson, \emph{J. Graph Theory} 67(1):69--82, 2011] but is $\mathsf{PSPACE}$-complete for every fixed $k\ge 4$, even on bipartite graphs [Bonsma and Cereceda, \emph{Theor. Comput. Sci.} 410(50):5215--5226, 2009].
In this work, we initiate a study of the distance-$d$ analogue, for $d \geq 2$. We show that even for planar, bipartite, and $2$-degenerate graphs, \textsc{$(d, k)$-Coloring Reconfiguration} remains $\mathsf{PSPACE}$-complete for every $d \geq 3$ via a reduction from the well-known \textsc{Sliding Tokens} problem. Our construction uses $k = k_0 + 2 + n(\lceil d/2\rceil-1)$ colors on instances of size $n$, where $k_0\in\{3d+3,3d+6\}$ (depending on the parity of $d$). For $d = 2$, the same reduction scheme can be adapted to show that the problem is $\mathsf{PSPACE}$-complete on planar and $2$-degenerate graphs with same values of $k$.
Additionally, on split graphs, there is an interesting dichotomy: the problem is $\mathsf{PSPACE}$-complete when $d = 2$ and $k$ is large but can be solved efficiently when $d \geq 3$ and $k \geq d+1$. For chordal graphs, we show that the problem is $\mathsf{PSPACE}$-complete for even values of $d \geq 2$. Finally, we design a quadratic-time algorithm to solve the problem on paths for any $d \geq 2$ and $k \geq d+1$.

\keywords{Reconfiguration problem \and $d$-Distance $k$-Coloring \and Computational complexity \and PSPACE-completeness \and Polynomial-time algorithm
}
\end{abstract}
\section{Introduction}\label{sec:intro}
For the last few decades, \textit{reconfiguration problems} have emerged in various areas of computer science, including computational geometry, recreational mathematics, and constraint satisfaction~\cite{Heuvel13,Nishimura18,MynhardtN19}.
Given a \textit{source problem} $\mathcal{P}$ (e.g., \textsc{Satisfiability}, \textsc{Coloring}, \textsc{Independent Set}),
one can define its \textit{reconfiguration variants}.
In such a variant, two \textit{feasible solutions} (e.g., satisfying truth assignments, proper vertex-colorings, independent sets) $S$ and $T$ of $\mathcal{P}$ are given along with a prescribed \textit{reconfiguration rule} that usually describes a ``small'' change in a solution.
The question is to decide if there is a sequence of feasible solutions that transforms $S$ into $T$, where each intermediate member is obtained from its predecessor by applying the reconfiguration rule exactly once.
Such a sequence, if it exists, is called a \textit{reconfiguration sequence}.

\paragraph{Distance Coloring}\label{sec:intro-dc}

The concept of \textit{$(d, k)$-coloring} (or \textit{$d$-distance $k$-coloring}) was introduced in 1969 by Kramer and Kramer~\cite{KramerK1969,KramerK1969-2}.
For a graph $G = (V, E)$ and integers $d \geq 1$ and $k \geq d+1$, a \textit{$(d, k)$-coloring} of $G$ is an assignment of $k$ colors to the vertices of $G$ such that no two vertices within distance $d$ share the same color.
In particular, the classic \textit{proper $k$-coloring} is the case when $d = 1$.
The \textsc{$(d, k)$-Coloring} problem, which asks if a given graph $G$ has a $(d, k)$-coloring, is known to be \NP-complete for any fixed $d \geq 2$ and large $k$~\cite{Mccormick1983,LinS1995}.
In 2007, Sharp~\cite{Sharp07} proved the following complexity dichotomy: \textsc{$(d, k)$-Coloring} can be solved in polynomial time for $k \leq \lfloor 3d/2 \rfloor$ but is \NP-hard for $k > \lfloor 3d/2 \rfloor$.
We refer readers to the survey~\cite{KramerK2008} for more details on related \textsc{$(d, k)$-Coloring} problems.

\paragraph{{\bf Coloring Reconfiguration}}\label{intro:cr}
$k$-\CR has been extensively studied in the literature~\cite{Nishimura18,MynhardtN19,Mahmoud2024}.
In $k$-\CR, we are given two proper $k$-colorings $\alpha$ and $\beta$ of a graph $G$ and want to decide if
there exists a way to recolor vertices one by one, starting from $\alpha$ and ending at $\beta$,
such that every intermediate coloring is still a proper $k$-coloring.
It is well-known that $k$-\CR is \PSPACE-complete for any fixed $k \geq 4$ on bipartite graphs, for any fixed $4 \leq k \leq 6$ on planar graphs, 
and for $k = 4$ on bipartite planar graphs (and thus $3$-degenerate graphs)~\cite{BonsmaC09}. 
A further note from Bonsma and Paulusma~\cite{BonsmaP19} shows that $k$-\CR is \PSPACE-complete even on $(k-2)$-connected bipartite graphs for $k \geq 4$.
Indeed, the problem remains \PSPACE-complete even on planar graphs of bounded bandwidth and low maximum degree~\cite{vanderZanden15}.
On the other hand, for $1 \leq k \leq 3$, $k$-\CR can be solved in linear
time~\cite{CerecedaHJ11,JohnsonKKPP16}.
Additionally, $k$-\CR is solvable in polynomial time on planar graphs for $k \geq 7$ and on bipartite planar graphs for
$k \geq 5$~\cite{BonsmaC09,Heuvel13}.
With respect to graph classes, $k$-\CR is solvable in polynomial time on $2$-degenerate graphs (which contains graphs of treewidth at most
two such as trees, cacti,
outerplanar graphs, and series-parallel graphs) and several subclasses of chordal graphs~\cite{HatanakaIZ19,BonsmaP19}. 

\paragraph{{\bf List Coloring Reconfiguration}}\label{sec:intro-lcr}
A generalized variant of $k$-\CR, the \textsc{List $k$-\CR} problem, has also been
well-studied. Here, like in $k$-\CR, given a graph $G$ and two proper $k$-colorings $\alpha, \beta$, we want to transform
$\alpha$ into $\beta$ and vice versa.
However, we also require that each vertex has a list of at most $k$ colors from $\{1, \dots, k\}$ attached,
which are the only colors each vertex is allowed to have.
In particular, $k$-\CR is nothing but \textsc{List $k$-\CR} when every color list is $\{1, \dots, k\}$.
Indeed, along the way of proving the \PSPACE-completeness of $k$-\CR, Bonsma and Cereceda~\cite{BonsmaC09} showed that \textsc{List $k$-\CR} is \PSPACE-complete for any fixed $k \geq 4$.
Cereceda, Van den Heuvel, and Johnson~\cite{CerecedaHJ11} showed that $k$-\CR is solvable in polynomial time for $1 \leq k \leq 3$ and their algorithms can
be extended for \textsc{List $k$-\CR}.\footnote{Van den Heuvel~\cite{Heuvel13} stated that \textsc{$k$-List-Color-Path}    is \PSPACE-complete for any $k \geq 3$,
    which appears to be different from
    what we mentioned for $k = 3$.
    However, note that, the two problems are different.
    In his definition, each list has size at most $k$, but indeed one may use more than $k$ colors in total.
    On the other hand, in our definition, one \textit{cannot} use more than $k$ colors in total.}
Hatanaka, Ito, and Zhou~\cite{HatanakaIZ15} initiated a systematic study of \textsc{List $k$-\CR} and showed the following complexity dichotomy: The problem is
\PSPACE-complete on graphs of
pathwidth two but polynomial-time solvable on graphs of pathwidth one (whose components are caterpillars---the trees obtained by attaching leaves to a central path).
They also noted that their hardness result can be extended for threshold graphs.
Wrochna~\cite{Wrochna18} showed that \textsc{List $k$-\CR} is \PSPACE-complete on bounded bandwidth graphs and the constructed graph in his
reduction also has pathwidth two,
which independently confirmed the result of Hatanaka, Ito, and Zhou~\cite{HatanakaIZ15}.

\subsection{Our Problem and Results}
In this paper, for $d \geq 2$ and $k \geq d+1$, we study the $(d,k)$-\CR problem, a generalized variant of $k$-\CR.
In $(d, k)$-\CR, we are given two $(d, k)$-colorings $\alpha$ and $\beta$ of a graph $G$ and want to decide if
there exists a way to recolor vertices one by one, starting from $\alpha$ and ending at $\beta$,
such that every intermediate coloring is still a $(d, k)$-coloring.

Recall that the \textit{$d$-th power} of a graph $G$, denoted by $G^d$, is the graph with $V(G^d) = V(G)$ and
$E(G^d) = \{uv \mid u, v \in V(G) \text{ and } \dist_G(u, v) \leq d\}$, where $\dist_G(u, v)$ denotes the \textit{distance} (i.e., the length of a shortest path)
between $u$ and $v$ in $G$.
It is well known that $\alpha$ is a $(d, k)$-coloring of a graph $G$ if and only if $\alpha$ is a $(1, k)$-coloring of $G^d$.
At first glance, this relationship suggests a straightforward proof of \PSPACE-completeness for $(d,k)$-\CR: since the problem is already known to
be \PSPACE-complete for $d = 1$ and $k \geq 4$~\cite{BonsmaC09}, one might attempt to reduce $(1,k)$-\CR to $(d,k)$-\CR.
However, we emphasize that \textit{such a reduction fails}.

For a valid polynomial-time reduction, given any instance $(G, \alpha, \beta)$ of $(1, k)$-\CR, one would need to
construct a corresponding instance $(H, \alpha, \beta)$ of $(d, k)$-\CR where $H$ is a \textit{$d$-th root} of $G$ (i.e., $H^d \simeq G$).
The problem lies in finding such a root: deciding whether a graph $G$ has a $d$-th root is $\NP$-complete for all fixed $d \geq 2$ in chordal graphs and remains $\NP$-complete even on bipartite graphs for all fixed $d \geq 3$~\cite{LeN2010}.
Therefore, unless $\P = \NP$, this reduction cannot be computed in polynomial time, invalidating this approach for proving \PSPACE-completeness.

Our major contribution, shown in \cref{sec:completeness}, is that for every $d \geq 3$ and $k \geq d+1$, $(d,k)$-\CR is \PSPACE-complete even on graphs that are bipartite, planar, and $2$-degenerate. 
Additionally, when $d = 2$, the same reduction scheme can be adapted to show that the problem is \PSPACE-complete on planar and $2$-degenerate graphs.
In this context, recall that $k$-\CR was \PSPACE-complete on bipartite graphs for $k \geq 4$ and planar graphs for $4 \leq k \leq 6$ but was polynomial time solvable on planar graphs for $k \geq 7$ and $2$-degenerate graphs.

\begin{restatable}{theorem}{thmmain}\label{thm:main}
   
    For every integer $d \geq 2$, given two $(d,k)$-colorings $\alpha,\beta$ of a graph $G$ (where $k$ is part of the input), it is \PSPACE-complete to decide if there is a reconfiguration sequence that transforms $\alpha$ into $\beta$ and vice versa, even if either 
    \begin{itemize}
        \item[(i)] $d \geq 3$ and $G$ is bipartite, planar, and $2$-degenerate, or
        \item[(ii)] $d = 2$ and $G$ is planar and $2$-degenerate.
    \end{itemize}
    Moreover, in case~(i) our reduction produces instances with $k = k_0+2+n(\lceil d/2\rceil-1)=\Theta(nd + k_0)$ colors, where $k_0\in\{3d+3,3d+6\}$.\end{restatable}

In \cref{sec:split}, we investigate the $(d,k)$-\CR problem on split graphs and chordal graphs.
First, we revisit the \NP-completeness proof by Bodlaender et al.~\cite{BodlaenderKTL04} for \textsc{$(2, k)$-Coloring} on split graphs. 
Based on their reduction, we then establish that $(2,k)$-\CR is \PSPACE-complete on split graphs for sufficiently large values of $k$.

\begin{restatable}{theorem}{thmsplit}\label{thm:2kcolreconf-split}
    $(2, k)$-\CR on split graphs is \PSPACE-complete.
\end{restatable}

Additionally, we further extend the reduction on split graphs and show the following result for chordal graphs.

\begin{restatable}{theorem}{thmchordal}\label{thm:chordal}
    $(d, k)$-\CR is \PSPACE-complete on chordal graphs for even values of $d \geq 2$.
\end{restatable}

On the algorithmic side (\cref{sec:algorithms}), we show simple polynomial-time algorithms for graphs of diameter at most $d$ (\cref{sec:ddiameter}) and paths (\cref{sec:paths}).

\begin{restatable}{theorem}{thmdiamd}\label{thm:diam-d}
    Let $G$ be any $(d, k)$-colorable graph on $n$ vertices whose diameter is at most $d$.
    Then, $(d, k)$-\CR is solvable in $O(\log n + \log k)$ time.
    Moreover, given a yes-instance $(G, \alpha, \beta)$, one can construct in $O(n)$ time a reconfiguration sequence between $\alpha$ and
    $\beta$.
\end{restatable}

\begin{restatable}{theorem}{thmpaths}\label{thm:paths}
    $(d, k)$-\CR on $n$-vertex paths can be solved in $O(\log k + \log d)$ time.
    Moreover, in a yes-instance, one can construct a corresponding reconfiguration sequence in $O(n^2)$ time.
\end{restatable}

In particular, \cref{thm:diam-d} implies that $(d, k)$-\CR is in \P{} on split graphs (whose components have a diameter of at most $3$) for any $d \geq 3$.

\subsection{On the Technical Contributions of Our Main Result}
\label{sec:tech-contrib}

In this section, we outline the main technical obstacles that arise in proving our main theorem (\cref{thm:main}),
and we summarize the ideas used to overcome them.

\paragraph{Scope of the Main Hardness Statement}
Our primary contribution, presented in \cref{sec:completeness}, shows that for every fixed $d \ge 3$,
the $(d,k)$-\CR problem is \PSPACE-complete even when restricted to graphs that are bipartite, planar, and
$2$-degenerate, with $k$ given as part of the input (in particular, the number of colors produced by our reduction
is $k=\Theta(nd)$). For $d=2$, the same reduction scheme yields \PSPACE-completeness on planar $2$-degenerate graphs.
The case $d=1$ was settled by Bonsma and Cereceda~\cite{BonsmaC09}; our contribution is the extension to distance-$d$
constraints under strong structural restrictions.

\paragraph{Two-Step Strategy}
Our approach follows the high-level template of~\cite{BonsmaC09} but requires additional ideas to handle the fact that
constraints are imposed at distance $d\ge 2$ rather than only on adjacent vertices.
We proceed via two reductions.

\begin{enumerate}
  \item \textbf{List-to-nonlist reduction (\cref{sec:dk-col-reconf}).}
  We give a polynomial-time reduction from \textsc{List $(d,k)$-\CR} to $(d,k')$-\CR with $k' = k+2+n(\lceil d/2\rceil-1) \;=\; \Theta(nd+k)$,
  where $n$ is the number of vertices in the input list-instance.

  \item \textbf{Hardness of the list version (\cref{sec:list-dk-col-reconf}).}
  We show that \textsc{List $(d,k)$-\CR} is \PSPACE-complete on bipartite, planar, and $2$-degenerate graphs
  for every fixed $d\ge 2$ and sufficiently large fixed $k$
  (namely $k \ge 3d+3$ if $d$ is odd and $k \ge 3d+6$ if $d$ is even).
  This is obtained by first introducing, in \cref{sec:sliding-tokens}, a variant of the \ST problem that remains
  \PSPACE-complete even on planar $2$-degenerate graphs, and then giving a polynomial-time reduction from this
  variant to \textsc{List $(d,k)$-\CR}.

  Finally, combining these two steps yields our main \PSPACE-hardness statement for $(d,k)$-\CR by transforming
  the resulting list-instance into a non-list instance via the list-to-nonlist reduction of \cref{sec:dk-col-reconf}.
  In particular, this final transformation uses $k'=\Theta(nd+k)$ colors, and \cref{lem:dkcol-color-lb} shows that
  the dependence on $n$ is inherent for the present scheme.
\end{enumerate}

\paragraph{Why Distance-$d$ Creates New Difficulties}
To highlight the technical differences with~\cite{BonsmaC09}, we describe the key locality properties exploited
when $d=1$ and explain why they fail for $d\ge 2$.
For $j\in\{1,2\}$, let $G_j$ denote the graph of the source instance of Reduction~$(j)$ and let $G'_j$ denote the
graph of the produced target instance.

\paragraph{Reduction (1): Simulating Lists via Frozen Graphs}
In Reduction~(1), each vertex $v$ of $G_1$ is assigned a list of admissible colors $L(v)\subseteq \{1,\dots,k\}$,
and all vertices of $G_1$ remain present in $G'_1$.
Bonsma and Cereceda~\cite{BonsmaC09} enforce the restriction ``$c\notin L(v)$'' (for $d=1$) as follows:
for each pair $(v,c)$ with $c\notin L(v)$ they attach to $v$ a pre-colored frozen graph $F_{v,c}$ containing all $k$
colors, and they join $v$ to a vertex of $F_{v,c}$ colored $c$.
Since the coloring of $F_{v,c}$ is frozen, $v$ can never take color $c$, thereby simulating the forbidden color.
The crucial locality property is that, when $d=1$, the attachment vertex $v$ acts as a separator:
vertices of $G_1$ other than $v$ are not constrained by vertices inside $F_{v,c}$.

For $d\ge 2$, this locality is no longer automatic.
Indeed, if a vertex $w\in V(G_1)$ lies within distance $d$ of some vertex $x\in V(F_{v,c})$ colored $c_x$,
then the distance-$d$ rule forbids $w$ from receiving color $c_x$ in $G'_1$.
This may inadvertently remove colors that are allowed at $w$ in the original list instance (i.e., it may happen that
$c_x\in L(w)$), so the naive frozen-gadget attachment fails.
Consequently, to correctly simulate lists at distance $d$, we must (i) introduce additional auxiliary colors and
(ii) design the frozen gadgets so that they forbid exactly the intended colors and do not ``leak'' constraints to
unrelated vertices of $G_1$.

A second obstacle is global: once auxiliary colors are introduced, every frozen graph must use \emph{all} colors in the
palette (both original and auxiliary) in its frozen coloring.
Since our construction introduces auxiliary colors that depend on the vertex set of $G_1$, each frozen graph attached
to a vertex must be able to realize all these colors while remaining frozen, planar, bipartite (for $d\ge 3$),
and sparse, and while still ensuring that auxiliary colors cannot be used on the original vertices of $G_1$.
In \cref{sec:dk-col-reconf}, we present a dedicated family of frozen graphs that overcomes these issues.
In particular, unlike the frozen graphs in~\cite{BonsmaC09} (which are either bipartite or planar), our gadgets are
bipartite (for $d \geq 3$), planar, and even $2$-degenerate, while still supporting the required frozen
colorings under the distance-$d$ constraint.

\paragraph{Reduction (2): Enforcing Token Constraints via Forbidding Paths}
In Reduction~(2), Bonsma and Cereceda~\cite{BonsmaC09} reduce from \ST by encoding token placements as list colorings.
Their \ST instances consist of token triangles and token edges. Each token triangle $i$ (with vertices
$t_{i1},t_{i2},t_{i3}$) is contracted to a vertex $t_i$ in $G'_2$ with list $\{1,2,3\}$, and each token edge $j$
(with vertices $e_{j1},e_{j2}$) is contracted to a vertex $e_j$ with list $\{1,2\}$.
A token move within a gadget (e.g., from $t_{i1}$ to $t_{i2}$) corresponds to recoloring the corresponding vertex
(e.g., recoloring $t_i$ from $1$ to $2$).

Adjacency constraints between tokens are enforced using \emph{$(a,b)$-forbidding paths}:
if the vertex $t_{ia}$ of triangle $i$ is adjacent (in the \ST instance) to the vertex $t_{\ell b}$ of triangle $\ell$,
then one must ensure that $t_i$ and $t_\ell$ are never simultaneously colored $a$ and $b$.
This is achieved by placing an $(a,b)$-forbidding path between $t_i$ and $t_\ell$ whose internal vertices use auxiliary
colors outside $\{1,2,3\}$.
Again, for $d=1$ a key locality property holds: recoloring an internal vertex of a forbidding path does not affect
vertices outside the path, because the endpoints separate the internal auxiliary recolorings from the rest of the graph.

For $d\ge 2$, however, internal vertices of distinct forbidding paths can lie within distance $d$ in $G'_2$,
causing unintended interactions between gadgets.
This destroys the intended ``local simulation'' of token moves, unless one carefully arranges both graph distances and
auxiliary palettes.
In \cref{sec:list-dk-col-reconf}, we extend the forbidding-path framework by introducing buffer structure and a controlled
assignment of auxiliary colors so that recolorings inside one forbidding path remain local with respect to the
distance-$d$ constraint, even in the presence of many nearby gadgets.
This refinement is essential for the correctness of the reduction at distance $d$.

\paragraph{A Degeneracy Dichotomy}
Our constructions also imply \PSPACE-completeness on $2$-degenerate graphs for every $d\ge 2$.
Together with the known fact that the problem is in $\P$ for $d=1$ on $2$-degenerate graphs~\cite{HatanakaIZ19},
this yields an appealing complexity dichotomy on this sparse graph class.

\section{Preliminaries}\label{sec:prelims}

We refer readers to~\cite{Diestel2017} for the concepts and notations not defined here.
Unless otherwise mentioned, we always consider simple, undirected, connected graphs.
For two vertices $u, v$ of a graph $G$, we denote by $\dist_G(u, v)$ the \textit{distance}
(i.e., the length
of the shortest path) between $u$
and $v$ in $G$.
The \textit{diameter} of a graph $G$, denoted by $\diam(G)$, is the maximum distance between any pair of vertices in $G$.
We denote by $\Delta(G)$ and $\delta(G)$ the maximum and minimum degrees of a vertex of $G$, respectively.
We define $N_d(v)$ for a given graph $G$ to be the set of all vertices of distance at most $d$, i.e.,
$N_d(v) = \{ u\in V \mid \dist_G(u,v)\leq d\}$.
An \textit{$s$-degenerate} graph is an undirected graph in which every induced subgraph has a vertex of degree at most $s$.

\paragraph{$(d, k)$-Coloring}
For two positive integers $d \geq 1$ and $k \geq d+1$, a \textit{$(d, k)$-coloring} of a graph $G$ is a function
$\alpha\colon V(G) \to \{1, \dots, k\}$ such that for any pair of
distinct vertices $u$ and $v$, $\alpha(u) \neq \alpha(v)$ if $\dist_G(u,v) \leq d$.
In particular, a $(1, k)$-coloring of $G$ is also known as a \textit{proper $k$-coloring}.
If a graph $G$ has a $(d, k)$-coloring, we say that it is \textit{$(d, k)$-colorable}.
In this paper, we focus on the case $d \geq 2$.

One can generalize the concept of $(d, k)$-coloring to list $(d,k)$-coloring as follows.
A given function $L$ assigns to each vertex $v \in V(G)$ a list of possible colors $L(v) \subseteq \{1, \dots, k\}$.
A $(d, k)$-coloring $\alpha$ of $G$ is called a \textit{list $(d, k)$-coloring} if for every $v$, we have $\alpha(v) \in L(v)$.
In particular, if $L(v) = \{1, \dots, k\}$ for every $v \in V(G)$, then any list $(d, k)$-coloring of $G$ is also a $(d, k)$-coloring of
$G$ and vice versa.

\paragraph{(List) $(d, k)$-Coloring Reconfiguration}
Two (list) $(d, k)$-colorings $\alpha$ and $\beta$ of a graph $G$ are \textit{adjacent} if there exists exactly one $v \in V(G)$ such that
$\alpha(v) \neq \beta(v)$ and $\alpha(w) = \beta(w)$
for every $w \in V(G) - v$.
If $\beta$ is obtained from $\alpha$ (and vice versa) by recoloring only one $v$, we say that such a recoloring step is \textit{valid}.
Given two different (list) $(d, k)$-colorings $\alpha, \beta$ of a graph $G$, the \textsc{(List) $(d, k)$-\CR}  problem asks if there is a
sequence of (list) $(d, k)$-colorings
$\langle \alpha_0, \alpha_1, \dots, \alpha_\ell \rangle$ where $\alpha = \alpha_0$ and $\beta = \alpha_\ell$ such that $\alpha_i$ and
$\alpha_{i+1}$ are adjacent for every
$0 \leq i \leq \ell - 1$.
Such a sequence, if exists, is called a \textit{reconfiguration sequence} (i.e., a sequence of valid recoloring steps) between $\alpha$
and $\beta$.
An instance of \textsc{List $(d, k)$-\CR} is usually denoted by the $4$-tuple $(G, \alpha, \beta, L)$ and an instance of $(d, k)$-\CR by
the triple $(G, \alpha, \beta)$.

\section{\PSPACE-Completeness on Bipartite, Planar, and $2$-Degenerate Graphs}\label{sec:completeness}

In this section, we will prove \cref{thm:main}. Let us restate the theorem here.

\thmmain*

\subsection{Outline}

In \cref{sec:dk-col-reconf}, we describe a polynomial-time reduction from \textsc{List $(d, k)$-\CR} to $(d, k')$-\CR, where $k^\prime=\Theta(nd+k)$ and $n$ is the number of vertices in the input graph.
In \cref{sec:sliding-tokens}, we introduce a \PSPACE-complete variant of \ST.
In \cref{sec:list-dk-col-reconf}, we complete our reduction by describing a polynomial-time reduction from our variant to \textsc{List $(d, k)$-\CR}. 

\subsection{Reduction to $(d, k^\prime)$-\CR}\label{sec:dk-col-reconf}

{In this section, we present a polynomial-time reduction from \textsc{List $(d, k)$-\CR} to $(d, k^\prime)$-\CR, where $k^\prime = n(\lceil d/2 \rceil - 1) + 2 + k$ and $n = |V(G)|$. In particular, for every fixed integer $d\ge 2$, this implies that $(d,k)$-\CR is \PSPACE-complete when $k$ is part of the input.}
Specifically, we prove the following theorem.

\begin{theorem} \label{thm:coltolistcol}

For any fixed integer $d \geq 2$, given an instance $(G, \alpha, \beta, L)$ of \textsc{List $(d, k)$-\CR}, we can construct in polynomial time an instance $(G^\prime, \alpha^\prime, \beta^\prime)$ of $(d, k^\prime)$-\CR where $k^\prime = n(\lceil d/2 \rceil - 1) + 2 + k$ and $n = |V(G)|$, such that $(G, \alpha, \beta, L)$ is a yes-instance if and only if $(G^\prime, \alpha^\prime, \beta^\prime)$ is a yes-instance.

\end{theorem}
Note that any $(d, k)$-coloring of a graph $G$ is a list $(d, k)$-coloring of $G$ where $L(v) = \{1, \dots, k\}$ for every $v \in V(G)$.
To simulate the behavior of a list $(d, k)$-coloring, we need to constrain the available colors for each vertex $v$ to those in $L(v)$, which we achieve using \textit{frozen graphs}.
A graph $F$ with a $(d, k)$-coloring $\alpha$ is called a \textit{frozen graph} if no vertex in $F$ can be recolored---that is, there exists no reconfiguration sequence between $\alpha$ and any other $(d, k)$-coloring $\beta$ of $F$.
For each vertex $v$ in $G$, we construct a corresponding frozen graph $F_v$ and strategically position its vertices to enforce color restrictions: placing vertices of $F_v$ at distance $d+1$ from $v$ if their colors are in $L(v)$, and at distance at most $d$ otherwise.

\subsubsection{Frozen Graphs}
We begin by describing how a (precolored) frozen graph $F_v$ and its $(d, k^\prime)$-coloring $\alpha_v$ can be constructed for a vertex
$v \in V(G)$, where $k^\prime = n(\lceil d/2 \rceil - 1) + 2 + k$ and $n = |V(G)|$.
The gadget is illustrated in \cref{fig:listgadget}.
We emphasize that $v$ does not belong to its corresponding frozen graph $F_v$.

For $d \geq 3$, our construction is as follows.
First, for each $v \in V(G)$, we create a central vertex $c_v$.
We then construct a path $T_v$ which includes $c_v$ as an endpoint and has length $\lceil d/2 \rceil - 1$.
Suppose that $T_v = c_v c_{v,1}\dots c_{v,\lceil d/2 \rceil - 1}$.
Let $c^\prime_v = c_{v,\lceil d/2 \rceil - 1}$ be the endpoint of $T_v$ other than $c_v$.
Let $C_0 \notin \{1, \dots, k\}$ be a fixed color.
We color the vertices of $T_v$ starting from $c_v$ by using the color $C_0$ for $c_v$ and $\lceil d/2 \rceil - 1$ other distinct new
colors $C_{v, 1}, C_{v, 2}, \dots, C_{v, \lceil d/2 \rceil - 1}$ for the remaining vertices $c_{v,1}, \dots, c_{v,\lceil d/2 \rceil - 1}$,
respectively.
In particular, $c^\prime_v$ has color $C_{v, \lceil d/2 \rceil - 1}$.
We also remark that none of $C_{v, 1}, C_{v, 2}, \dots, C_{v, \lceil d/2 \rceil - 1}$ is in $\{1, \dots, k\}$.
At this point, so far, for each $v \in V(G)$, we have used $\lceil d/2 \rceil - 1$ distinct colors for vertices other than $c_v$ in each
$T_v$ and one fixed color $C_0$ for every $c_v$.
Thus, in total, $n(\lceil d/2 \rceil - 1) + 1$ distinct colors have been used.

Next, for each $v \in V(G)$ and each vertex $u \neq v$, we construct a (possibly trivial) path $T^v_u$ which includes $c_v$ as an endpoint and has
length $t := \lfloor d/2 \rfloor - 1$.
If $t = 0$ (equivalently, $d = 3$), then $T^v_u$ consists only of the vertex $c_v$ and we define ${c^\prime}^v_u := c_v$.
Otherwise, we write $T^v_u = c_v c^v_{u,1}\dots c^v_{u,t}$ and denote by ${c^\prime}^v_u = c^v_{u,t}$ the endpoint of $T^v_u$ other than $c_v$.
\begin{itemize}
    \item When $d$ is even, we have $\lceil d/2 \rceil - 1 = \lfloor d/2 \rfloor - 1$, i.e., the number of vertices in $T_u - c_u$ is
          equal to the number of vertices in $T^v_u - c_v$.
          In this case, we color the vertices of $T^v_u$ starting from $c_v$ by using the color $C_0$ for $c_v$ and the
          $\lceil d/2 \rceil - 1 = \lfloor d/2 \rfloor - 1$ other distinct colors $C_{u, 1}, C_{u, 2}, \dots, C_{u, \lfloor d/2 \rfloor - 1}$ respectively for the remaining vertices
          $c^v_{u,1}, \dots, c^v_{u, \lfloor d/2 \rfloor - 1}$.
          In particular, the endpoint ${c^\prime}^v_u$ has color $C_{u, \lfloor d/2 \rfloor - 1}$.
          (We note that all these colors are used to color vertices in $T_u$ for $u \in V(G)$.)

    \item When $d$ is odd, we have $\lceil d/2 \rceil - 1 = (\lfloor d/2 \rfloor - 1) + 1$, i.e., the number of vertices in $T_u - c_u$ is
          equal to the number of vertices in $T^v_u - c_v$ plus one.
          In this case, we color the vertices of $T^v_u$ starting from $c_v$ by using the color $C_0$ for $c_v$ and the
          $\lfloor d/2 \rfloor - 1$ other distinct colors $C_{u, 2}, C_{u, 3}, \dots, C_{u, \lceil d/2 \rceil - 1}$ respectively for the
          remaining vertices $c^v_{u,1}, \dots,\allowbreak c^v_{u, \lfloor d/2 \rfloor - 1}$, leaving one color $C_{u, 1}$ that has not yet been used.
          To handle this situation, we add a new vertex ${c^\star}^v_u$ adjacent to $c_v$ and color it by the color $C_{u, 1}$.
\end{itemize}

To finish our construction of $F_v$ and $\alpha_v$ for each $v \in V(G)$, we pick some vertex {$u_0$} in $G$ other than $v$.
Additionally, we add $k + 1$ extra new vertices labelled $c^\star_v, w_{v, 1}, \dots, w_{v, k}$.
We then join $c^\star_v$ to any $w_{v, i}$ where $i \in L(v) \subseteq \{1, \dots, k\}$
and join the endpoint {${c^\prime}^v_{u_0}$} of {$T^v_{u_0}$} to $c^\star_v$ and to any $w_{v, i}$ where $i \notin L(v)$.
Let $C_1$ be a fixed color that is different from any colors that have been used.
We finally color $c^\star_v$ by $C_1$, and each $w_{v, i}$ by the color $i \in \{1, \dots, k\}$.
At this point, $k + 1$ extra distinct colors are used.
In total, we use $k^\prime = (n(\lceil d/2 \rceil - 1) + 1) + (k + 1) = n(\lceil d/2 \rceil - 1) + 2 + k$ colors.
This concludes our construction for $d \geq 3$.

For $d = 2$, we construct $F_v$ and $\alpha_v$ as follows. We create a central vertex $c_v$ and color it by a fixed color $C_0$.
We create a new vertex $c^\star_v$ adjacent to $c_v$ and color it by a fixed color $C_1$.
We then add $k$ extra vertices $w_{v, 1}, \dots, w_{v, k}$ and color them by the colors $1, \dots, k$ respectively.
Next, we join $c^\star_v$ to any $w_{v, i}$ ($1 \leq i \leq k$) and join $c_v$ to any $w_{v, i}$ where $i \notin L(v)$.
In total, we use $k + 2$ colors, which is equal to $n(\lceil d/2 \rceil - 1) + 2 + k$ when $d = 2$.

\begin{lemma}\label{lem:frozengraph}
    Our construction correctly produces a frozen graph $F_v$ with its $(d, k^\prime)$-coloring $\alpha_v$.
\end{lemma}
\begin{proof}
    {For completeness, we verify the case $d=2$. In this case, $V(F_v)=\{c_v,{c^\star_v},w_{v,1},\dots,w_{v,k}\}$ with edges $c_v{c^\star_v}$, all edges ${c^\star_v} w_{v,i}$, and additionally edges $c_v w_{v,i}$ for all $i\notin L(v)$. Hence $\mathrm{diam}(F_v)\le 2=d$. Moreover, $\alpha_v$ is rainbow on $V(F_v)$ (with $\alpha_v(c_v)=C_0$, $\alpha_v({c^\star_v})=C_1$, and $\alpha_v(w_{v,i})=i$), and therefore no vertex is recolorable in a $(2,k^\prime)$-coloring; thus $(F_v,\alpha_v)$ is frozen for $d=2$.}
    {Hence, we may assume $d\ge 3$ for the remainder of the proof.}

    Let $s:=\lceil d/2\rceil-1$ and $t:=\lfloor d/2\rfloor-1$, so that $s+t=d-2$.
    By construction, $F_v$ is a tree and $\alpha_v$ is rainbow on $V(F_v)$.

    We first show that $\diam(F_v)\leq d$.
        Since $F_v$ is a tree, it suffices to upper bound the distance between any two leaves.

        If $t=0$ (equivalently, $d=3$), then for every $u\neq v$ we have $T_u^v=\{c_v\}$ and ${c'}^{v}_{u}=c_v$.
        In this case, every leaf of $F_v$ is either the endpoint $c_v'$ of $T_v$, a vertex $w_{v,i}$, or (since $d$ is odd) a vertex ${{c^\star}^v_u}$ adjacent to $c_v$.
        Moreover, the unique leaves at distance $2$ from $c_v$ are the vertices $w_{v,i}$ with $i\in L(v)$ (which are adjacent to ${c^\star_v}$), while all other leaves are adjacent to $c_v$.
        Hence any two leaves are at distance at most $3$, and thus $\diam(F_v)\leq 3=d$.

        Assume now that $t\geq 1$.
        Let $u_0\in V(G)\setminus\{v\}$ be the selected vertex whose endpoint ${c'}^{v}_{u_0}$ is adjacent to ${c^\star_v}$ and to all vertices $w_{v,i}$ with $i\notin L(v)$.
        Every leaf of $F_v$ belongs to one of the following types:
        (i) the endpoint $c_v'$ of $T_v$,
        (ii) an endpoint ${c'}^{v}_{u}$ of $T_u^v$ with $u\in V(G)\setminus\{v,u_0\}$,
        (iii) a vertex $w_{v,i}$,
        and, when $d$ is odd, (iv) a vertex ${{c^\star}^v_u}$ with $u\in V(G)\setminus\{v\}$.
        A direct distance bound between these leaf types yields $\diam(F_v)\leq d$:
        \begin{itemize}
            \item $\dist_{F_v}(c_v', {c'}^{v}_{u}) = s+t = d-2$ for every $u\neq v$.
            \item For any two distinct vertices $u,u'\in V(G)\setminus\{v\}$, we have $\dist_{F_v}({c'}^{v}_{u}, {c'}^{v}_{u'}) = 2t \le d$.
            \item For every $i\notin L(v)$, we have $\dist_{F_v}(c_v', w_{v,i}) = (s+t)+1 = d-1$.
            \item For every $i\in L(v)$, we have $\dist_{F_v}(c_v', w_{v,i}) = (s+t)+2 = d$ (via ${c'}^{v}_{u_0}$ and ${c^\star_v}$).
            \item For every $u\in V(G)\setminus\{v,u_0\}$ and $i\in[k]$, we have $\dist_{F_v}({c'}^{v}_{u}, w_{v,i}) \leq 2t+2\leq d$ (and in fact $\leq d-1$ when $d$ is odd).
            \item When $d$ is odd, for every $u\in V(G)\setminus\{v\}$ and $i\in[k]$ we have
            \begin{align*}
            \dist_{F_v}(c_v',{{c^\star}^v_u}) & = s+1 \le d, \\
            \dist_{F_v}({c'}^{v}_{u},{{c^\star}^v_{u'}}) & \le t+1 \le d, \\
            \dist_{F_v}({{c^\star}^v_u},w_{v,i}) & \le t+3 \le d, \\
            \dist_{F_v}({{c^\star}^v_u},{{c^\star}^v_{u'}}) & = 2 \quad \text{for all } u \neq u'.
            \end{align*}
            \item For any $i,j\in[k]$, we have $\dist_{F_v}(w_{v,i},w_{v,j})\le 3$.
        \end{itemize}

    Now fix any vertex $x\in V(F_v)$.
    Since $\alpha_v$ is rainbow, every color $c\neq \alpha_v(x)$ appears on a unique vertex $y\in V(F_v)$.
    As $\diam(F_v)\leq d$, we have $\dist_{F_v}(x,y)\leq d$, and therefore $x$ cannot be recolored to $c$ in a $(d,k^\prime)$-coloring.
    Hence no vertex of $F_v$ is recolorable.
\end{proof}
One can verify that our construction indeed can be done in polynomial time.

\begin{figure}[ht]
    \centering
    \colorlet{newcolors}{Mulberry}
    \colorlet{shortpath}{blue}
    \colorlet{longpath}{red}
    \colorlet{originalc}{GreenYellow}
    \begin{tikzpicture}[scale=0.7, transform shape]
        \tikzmath{\x = 5; \y =\x - 1;}        \tikzstyle{basic}=[draw,thick,circle,minimum size=8mm,inner sep=0pt]

        \node[basic,newcolors] (c) at (0,0) {\LARGE $c_v$};
        \node[basic,shortpath] (1) at (360/\x: 3cm) {\Large ${c^\prime}^v_{u_1}$};
        \draw[shortpath] (c) edge["\Large $T^v_{u_1}$"] (1);

        \foreach \i in {2,3,...,\y} {
        \node[basic,shortpath] (\i) at (\i*380/\x: 3cm) {\Large ${c^\prime}^v_{u_{\i}}$};        \draw[shortpath] (c) edge["\Large $T^v_{u_\i}$"] (\i);
        }
        \node[basic,originalc,text=black] (v) at (4.5,-1) {\LARGE $v$};
        \draw[longpath] (c) edge["\Large $T_v+v$"] (v);

        \node[basic] (u) at ([shift={(1,-1)}]v.center) {\LARGE $u$};
        \draw (v) -- (u);
        \draw[dashed] (u) -- ([shift={(1.25,-0.25)}]u.center);
        \draw[dashed] (u) -- ([shift={(1.25,-0.75)}]u.center);
        \draw[dashed] (u) -- ([shift={(1.25,-1.25)}]u.center);

        \node[basic,shortpath] (lastend) at (0,-2.5) {\Large ${c^\prime}^v_{u_5}$};
        \node[basic,newcolors] (last) at (0,-4) {\Large ${c^\star_v}$};
        \draw (lastend) -- (last);
        \draw[shortpath] (c) edge["\Large $T^v_{u_5}$"] (lastend);

        \node[basic] (1) at (-3,-5.25) {\Large $w_{v, 1}$};
        \node[basic] (2) at (-1,-5.25) {\Large $w_{v, 2}$};
        \node[basic] (3) at (1, -5.25) {\Large $w_{v, 3}$};
        \node[basic] (4) at (3, -5.25) {\Large $w_{v, 4}$};
        \node[basic] (5) at (5, -5.25) {\Large $w_{v, 5}$};

        \draw (lastend) -- (1);
        \draw (last) -- (2);
        \draw (last) -- (3);
        \draw[bend right] (lastend) -- (4);
        \draw (lastend) -- (5);

        \begin{pgflowlevelscope}{\pgftransformscale{0.9}}
            \matrix [draw,below left] at ($(current bounding box.north east)+(0.8,0.8)$) {
            \node [fill=shortpath,label=right:$\lfloor \frac{d}{2}\rfloor-1$ Edges] {};             \\
            \node [fill=longpath,label=right:$\lceil \frac{d}{2}\rceil $ Edges] {};                \\
            \node [fill=newcolors,label=right:{Colors $C_0,C_1$}] {};                           \\
            };
        \end{pgflowlevelscope}
    \end{tikzpicture}
    \caption{An example of a vertex $v$ joining to its corresponding frozen graph $F_v$. Here $d$ is even, $k = 5$, $G$ is some list $(d,k)$-colorable graph having six vertices labelled $v, u_1, u_2, \dots, u_5$ (note that here we selected $u_0 = u_5$), and $L(v) = \{2, 3\} \subseteq \{1, \dots, 5\}$.\label{fig:listgadget}}
\end{figure}

\subsubsection{Construction of An Instance $(G^\prime, \alpha^\prime, \beta^\prime)$ of $(d, k^\prime)$-\CR}

Given an instance $(G, \alpha, \beta, L)$ of \textsc{List $(d, k)$-\CR}, we now describe how to construct $G^\prime$ and its two
$(d, k^\prime)$-colorings $\alpha^\prime, \beta^\prime$, where $k^\prime = n(\lceil d/2 \rceil - 1) + 2 + k$.
To construct $G^\prime$, we start from the original graph $G$,
and we construct $F_v$ for each $v \in V(G)$ as described before.
Then, for $d \geq 3$, we simply join $v$ to $c^\prime_v$ --- the endpoint of $T_v$ other than $c_v$.
For $d = 2$, we join $v$ to $c_v$.
To construct $\alpha^\prime$ from $\alpha$, we simply assign $\alpha^\prime(v) = \alpha(v)$ for any $v \in V(G)$ and
$\alpha^\prime(w) = \alpha_v(w)$ for any $w \in V(F_v)$.
The construction of $\beta^\prime$ is similar.
One can verify that for any $v \in V(G)$, no vertex in $F_v$ can be recolored in $G^\prime$.
Again, to see this, observe that $(F_v,\alpha_v)$ is frozen by \cref{lem:frozengraph}.
Since $F_v$ is an induced subgraph of $G'$, adding vertices outside $F_v$ can only introduce additional distance-$d$ constraints,
and therefore no vertex of $F_v$ becomes recolorable in $G'$.
One can also verify that our construction can be done in polynomial time.

The following simple observation will be useful later in our proof.

\begin{lemma}\label{lem:dkcol-graph-type}
    Suppose that $G$ is planar, bipartite, and $2$-degenerate.
    The constructed graph $G^\prime$ is planar, bipartite, and $2$-degenerate if $d \geq 3$, and is planar and $2$-degenerate but not necessarily bipartite if $d = 2$.
\end{lemma}
\begin{proof}
    For $d \geq 3$, each frozen graph is planar, bipartite, and $1$-degenerate (as it is essentially a tree). Thus, from our construction of $G^\prime$, since $G$ is planar, bipartite, and $2$-degenerate, so is $G^\prime$.

    On the other hand, for $d = 2$, each frozen graph is planar and $2$-degenerate. Thus, from our construction of $G^\prime$, since $G$ is planar and $2$-degenerate, so is $G^\prime$.
    When $L(v)$ is a proper subset of $\{1, \dots, k\}$ for some $v \in V(G)$, the frozen graph $F_v$ contains a triangle formed by $c_v$, $c^\star_v$, and any $w_{v, i}$ where $i \notin L(v)$. In this case, $G^\prime$ is not bipartite.
\end{proof}

In the following lemma, we show that our construction correctly produces an instance of \textsc{$(d, k^\prime)$-\CR}.

\begin{lemma}\label{lem:dkcol-inst}
    $\alpha^\prime$ is a $(d, k^\prime)$-coloring of $G^\prime$.
    Consequently, so is $\beta^\prime$.
\end{lemma}

\begin{proof}
To show that $\alpha^\prime$ is a $(d,k^\prime)$-coloring of $G^\prime$, we prove that
\[
(\star)\qquad \text{for all distinct }x,y\in V(G^\prime),\ \alpha^\prime(x)=\alpha^\prime(y)\ \Rightarrow\ \dist_{G^\prime}(x,y)>d.
\]

For each $v\in V(G)$, the gadget $F_v$ is attached to $G$ by a single edge, namely $vc_v$ if $d=2$, and $vc_v^\prime$ if $d\ge 3$.
Hence:
\begin{enumerate}[label=(P\arabic*)]
    \item\label{it:P1} For all $a,b\in V(G)$, $\dist_{G^\prime}(a,b)=\dist_G(a,b)$.
    \item\label{it:P2} For all $a\in V(G)$ and all $y\in V(F_v)$,
    \[
        \dist_{G^\prime}(a,y)=\dist_G(a,v)+\dist_{G^\prime}(v,y).
    \]
    \item\label{it:P3} For all $x\in V(F_u)$ and $y\in V(F_v)$ with $u\neq v$,
    \[
        \dist_{G^\prime}(x,y)=\dist_{G^\prime}(x,u)+\dist_G(u,v)+\dist_{G^\prime}(v,y).
    \]
\end{enumerate}
Indeed, any path entering $F_v$ from outside must pass through its anchor $v$, and any path between two distinct gadgets must pass through both anchors and a path in $G$ between them.

\begin{itemize}
    \item \textbf{Case 1: $d=2$.} 
    
    Let $x\neq y$ with $\alpha^\prime(x)=\alpha^\prime(y)$.
    If $x,y\in V(G)$, then \ref{it:P1} and the fact that $\alpha$ is a $(2,k)$-coloring give $\dist_{G^\prime}(x,y)=\dist_G(x,y)>2$.
    If $x\in V(G)$ and $y\in V(F_v)$, then $\alpha^\prime(x)\in\{1,\dots,k\}$, hence $y=w_{v,i}$ for $i:=\alpha^\prime(x)$.
    If $x=v$, then $i=\alpha(v)\in L(v)$ and thus $c_vw_{v,i}\notin E(G^\prime)$, so $\dist_{G^\prime}(v,w_{v,i})=3>2$.
    If $x\neq v$, then by \ref{it:P2} we have
    \[
    \dist_{G^\prime}(x,w_{v,i})=\dist_G(x,v)+\dist_{G^\prime}(v,w_{v,i})\ge 1+2=3>2,
    \]
    since always $\dist_{G^\prime}(v,w_{v,i})\ge 2$.
    Finally, if $x\in V(F_u)$ and $y\in V(F_v)$ with $u\neq v$, we distinguish the common color.
    If it is $i\in\{1,\dots,k\}$, then $x=w_{u,i}$ and $y=w_{v,i}$, and \ref{it:P3} yields
    $\dist_{G^\prime}(x,y)\ge 2+1+2=5>2$.
    If it is $C_0$, then $x=c_u$ and $y=c_v$, and $\dist_{G^\prime}(x,y)\ge 1+1+1=3>2$.
    If it is $C_1$, then $x=c_u^\star$ and $y=c^\star_v$, and $\dist_{G^\prime}(x,y)\ge 2+1+2=5>2$.
    Thus $(\star)$ holds for $d=2$.
    
    \item \textbf{Case 2: $d\ge 3$.}
    
    Let $x\neq y$ with $\alpha^\prime(x)=\alpha^\prime(y)$.
    If $x,y\in V(G)$, then \ref{it:P1} and the fact that $\alpha$ is a $(d,k)$-coloring give $\dist_{G^\prime}(x,y)=\dist_G(x,y)>d$.
    If $x,y\in V(F_v)$ for some $v$, then $\alpha_v$ is rainbow on $V(F_v)$, so no two distinct vertices of $F_v$ share a color; thus this case cannot occur.
    Hence, it remains to consider the cases where $x$ and $y$ lie in different parts.

    \begin{itemize}
        \item \textbf{Case 2.1: $x\in V(G)$ and $y\in V(F_v)$.}
        
        Then $\alpha^\prime(x)=\alpha(x)=i\in\{1,\dots,k\}$.
        By construction, the unique vertex of $F_v$ colored $i$ is $w_{v,i}$, hence $y=w_{v,i}$.
        If $x=v$, then $i=\alpha(v)\in L(v)$ and (letting $u_0\neq v$ be the vertex selected when constructing $F_v$) we have $\dist_{G^\prime}(v,w_{v,i})
        =
        \dist_{G^\prime}(v,c_v)+\dist_{F_v}(c_v,{c^\prime}^{v}_{u_0})+\dist_{G^\prime}({c^\prime}^{v}_{u_0},c^\star_v)+\dist_{G^\prime}(c^\star_v,w_{v,i})
        =
        \lceil d/2\rceil+(\lfloor d/2\rfloor-1)+1+1
        =d+1>d$,
        since $w_{v,i}$ is adjacent to $c^\star_v$, and $c^\star_v$ is adjacent to ${c^\prime}^{v}_{u_0}$.
        If $x\neq v$, then by \ref{it:P2} we have $\dist_{G^\prime}(x,w_{v,i})=\dist_G(x,v)+\dist_{G^\prime}(v,w_{v,i})\ge 1+d>d$,
        because always $\dist_{G^\prime}(v,w_{v,i})\ge d$.

        \item \textbf{Case 2.2: $x\in V(F_u)$ and $y\in V(F_v)$ with $u\neq v$.}
        
        We distinguish the common color $c:=\alpha^\prime(x)=\alpha^\prime(y)$.
        \begin{itemize}
            \item[(a)] $c=i\in\{1,\dots,k\}$.
            
            Then $x=w_{u,i}$ and $y=w_{v,i}$.
            Using \ref{it:P3} and $\dist_G(u,v)\ge 1$, we have $\dist_{G^\prime}(x,y)
            =
            \dist_{G^\prime}(u,w_{u,i})+\dist_G(u,v)+\dist_{G^\prime}(v,w_{v,i})
            \ge d+1+d>d$,
            since $\dist_{G^\prime}(u,w_{u,i})\ge d$ and $\dist_{G^\prime}(v,w_{v,i})\ge d$ for all $i$.

            \item[(b)] $c=C_0$.
            
            Then $x=c_u$ and $y=c_v$.
            By construction $\dist_{G^\prime}(u,c_u)=\lceil d/2\rceil$ and $\dist_{G^\prime}(v,c_v)=\lceil d/2\rceil$, hence by \ref{it:P3}, we have $\dist_{G^\prime}(x,y)=\lceil d/2\rceil+\dist_G(u,v)+\lceil d/2\rceil
            \ge 2\lceil d/2\rceil+1>d$.
            
            \item[(c)] $c=C_1$.
            
            Then $x=c_u^\star$ and $y=c^\star_v$.
            By construction $\dist_{G^\prime}(u,c_u^\star)=d$ and $\dist_{G^\prime}(v,c^\star_v)=d$, so \ref{it:P3} yields $\dist_{G^\prime}(x,y)=d+\dist_G(u,v)+d\ge 2d+1>d$.

            \item[(d)] $c=C_{p,j}$ for some $p\in V(G)$ and $1\le j\le \lceil d/2\rceil-1$.
            
            We use the following distances from the construction.
            In $F_p$, the unique vertex of color $C_{p,j}$ lies on $T_p$ at distance $j$ from $c_p$, hence
            \begin{equation}\label{eq:dist-in-Fp}
            \dist_{G^\prime}(p,z)=\lceil d/2\rceil-j \quad\text{for }z\in V(F_p)\text{ with }\alpha^\prime(z)=C_{p,j}.
            \end{equation}
            For $q\neq p$, the vertices of color $C_{p,j}$ in $F_q$ lie on the $p$-branch of $F_q$; thus
            \begin{equation}\label{eq:dist-in-Fq}
            \dist_{G^\prime}(q,z)=
            \begin{cases}
            \lceil d/2\rceil+j, & \text{if $d$ is even,}\\
            \lceil d/2\rceil+1, & \text{if $d$ is odd and $j=1$,}\\
            \lceil d/2\rceil+(j-1), & \text{if $d$ is odd and $j\ge 2$,}
            \end{cases}
            \end{equation}
            for $z \in V(F_q)$ with $\alpha^\prime(z)=C_{p,j}$.

            If $p\in\{u,v\}$, assume without loss of generality that $v=p$.
            Then \ref{it:P3}, \eqref{eq:dist-in-Fp}, \eqref{eq:dist-in-Fq}, and $\dist_G(u,p)\ge 1$ imply:
            \begin{align*}
            &\dist_{G^\prime}(x,y) = \dist_{G^\prime}(x,u)+\dist_G(u,p)+\dist_{G^\prime}(p,y)\\
            &\ge \begin{cases}
            (\lceil d/2\rceil+j)+1+(\lceil d/2\rceil-j)=d+1, & \text{if $d$ is even,}\\
            (\lceil d/2\rceil+1)+1+(\lceil d/2\rceil-1)=2\lceil d/2\rceil+1=d+2, & \text{if $d$ is odd and $j=1$,}\\
            (\lceil d/2\rceil+j-1)+1+(\lceil d/2\rceil-j)=2\lceil d/2\rceil=d+1, & \text{if $d$ is odd and $j\ge 2$,}
            \end{cases}
            \end{align*}     
            and in all cases $\dist_{G^\prime}(x,y)>d$.

            If $p\notin\{u,v\}$, then \ref{it:P3}, \eqref{eq:dist-in-Fq}, and $\dist_G(u,v)\ge 1$ give
            \begin{align*}
                &\dist_{G^\prime}(x,y) = \dist_{G^\prime}(x,u)+\dist_G(u,v)+\dist_{G^\prime}(v,y) \\
                &\ge
                \begin{cases}
                (\lceil d/2\rceil+j)+1+(\lceil d/2\rceil+j)=2\lceil d/2\rceil+2j+1>d, & \text{if $d$ is even,}\\
                (\lceil d/2\rceil+1)+1+(\lceil d/2\rceil+1)=2\lceil d/2\rceil+3>d, & \text{if $d$ is odd and $j=1$,}\\
                (\lceil d/2\rceil+j-1)+1+(\lceil d/2\rceil+j-1)=2\lceil d/2\rceil+2j-1>d, & \text{if $d$ is odd and $j\ge 2$.}
                \end{cases}
            \end{align*}
            and again $\dist_{G^\prime}(x,y)>d$.
        \end{itemize}
        This completes \textbf{Case~2.2} and hence establishes $(\star)$ for $d\ge 3$.
    \end{itemize}

\end{itemize}

Finally, $\beta^\prime$ is constructed from $\beta$ in the same way that $\alpha^\prime$ is constructed from $\alpha$, so the same argument shows that $\beta^\prime$ is also a $(d,k^\prime)$-coloring of $G^\prime$.
\end{proof}

\subsubsection{The Correctness of Our Reduction}

    We are now ready to prove the correctness of our reduction which will prove \cref{thm:coltolistcol}.

\begin{lemma}\label{lem:dkcol-correct}
    $(G, \alpha, \beta, L)$ is a yes-instance if and only if $(G^\prime, \alpha^\prime, \beta^\prime)$ is a yes-instance.
\end{lemma}
\begin{proof}
    {
    By \cref{lem:frozengraph}, for every $v\in V(G)$ the graph $F_v$ satisfies $\diam(F_v)\le d$, and by construction $\alpha_v$ is a coloring of $F_v$ that uses all $k^\prime$ colors exactly once.
    Fix any vertex $x\in V(F_v)$ and any color $c\in\{1,\dots,k^\prime\}\setminus\{\alpha_v(x)\}$.
    Since $\alpha_v$ uses all $k^\prime$ colors exactly once, there exists a unique vertex $y\in V(F_v)$ with $\alpha_v(y)=c$, and since $\diam(F_v)\le d$ we have $\dist_{F_v}(x,y)\le d$.
    Therefore recoloring $x$ to $c$ would immediately violate the $(d,k^\prime)$-constraint (already within $F_v$), and hence no vertex of $F_v$ can be recolored in any valid recoloring sequence of $G^\prime$.}

    Fix $v\in V(G)$.
    For every $i\in\{1,\dots,k\}\setminus L(v)$, the vertex $w_{v,i}\in V(F_v)$ has color $i$ and satisfies $\dist_{G^\prime}(v,w_{v,i})=d$ by construction.
    Hence $v$ can never be recolored to any color in $\{1,\dots,k\}\setminus L(v)$.

    Moreover, $v$ can never be recolored to any of the auxiliary colors in $\{C_0,C_1\}\cup\{C_{u,j}\mid u\in V(G),\,1\le j\le \lceil d/2\rceil-1\}$:
    the gadget $F_v$ contains a frozen vertex of each such color within distance at most $d$ from $v$.
    Indeed, $\dist_{G^\prime}(v,c_v)=1+(\lceil d/2\rceil-1)=\lceil d/2\rceil$ and $\alpha_v(c_v)=C_0$,
    while $\dist_{G^\prime}(v,c^\star_v)=d$ and $\alpha_v(c^\star_v)=C_1$.
    Finally, for every auxiliary color $C_{u,j}$, let $x_{u,j}^v$ be the unique vertex of $F_v$ colored by $C_{u,j}$; by construction $\dist_{F_v}(c_v,x_{u,j}^v)\le \lceil d/2\rceil-1$, and thus
    \[
        \dist_{G^\prime}(v,x_{u,j}^v)\le \dist_{G^\prime}(v,c_v)+\dist_{F_v}(c_v,x_{u,j}^v)\le \lceil d/2\rceil+(\lceil d/2\rceil-1)\le d.
    \]
    Since $x_{u,j}^v$ is frozen, $v$ is blocked from using $C_{u,j}$.

    Consequently, along any recoloring sequence in $G^\prime$, every vertex $v\in V(G)$ always has a color in $\{1,\dots,k\}$ and this color always lies in its list $L(v)$.
    Therefore, projecting any recoloring sequence in $G^\prime$ onto $V(G)$ yields a valid sequence for the list-instance $(G,\alpha,\beta,L)$.
    To see that this extension is valid in $G^\prime$, consider any recoloring step of the list-sequence that changes the color of a vertex $v\in V(G)$ to some $i\in L(v)$.
    Since the step is valid in $(G,\alpha,\beta,L)$, no vertex of $G$ within distance at most $d$ from $v$ has color $i$ at that moment.
    In $G^\prime$, the only additional vertices of color $i$ are the gadget vertices $w_{u,i}$ with $u\in V(G)$.
    If $u=v$, then by construction $\dist_{G^\prime}(v,w_{v,i})=d+1$, so $w_{v,i}$ does not block the recoloring of $v$.
{Suppose first that $d=2$. If $u\neq v$, then the only edge between $F_u$ and $G$ is $u c_u$.
Since $w_{u,i}\neq c_u$, we have $\dist_{F_u}(c_u,w_{u,i})\ge 1$, and thus
\[
    \dist_{G^\prime}(v,w_{u,i})=\dist_{G}(v,u)+1+\dist_{F_u}(c_u,w_{u,i})\ge \dist_G(v,u)+2\ge 3=d+1.
\]
Hence no vertex $w_{u,i}$ blocks the recoloring of $v$. We may therefore assume $d\ge 3$ in the following.}

    If $u\neq v$, then the only edge between $F_u$ and $G$ is $u c_u^\prime$, so any $vw_{u,i}$-path in $G^\prime$ must pass through $u$ and then enter $F_u$ via $c_u^\prime$.
    Hence
    \[
        \dist_{G^\prime}(v,w_{u,i})=\dist_{G}(v,u)+1+\dist_{F_u}(c_u^\prime,w_{u,i}).
    \]
    Moreover, $\dist_{F_u}(c_u^\prime,w_{u,i})\ge d-1$ holds for all $i\in[k]$:
    indeed, if $i\notin L(u)$ then $w_{u,i}$ is adjacent to the distinguished endpoint ${c'}_{u_0}^{u}$, and $\dist_{F_u}(c_u^\prime,{c'}_{u_0}^{u})=d-2$;
    if $i\in L(u)$ then $w_{u,i}$ is adjacent to $c_u^\star$, which is adjacent to ${c'}_{u_0}^{u}$, yielding $\dist_{F_u}(c_u^\prime,w_{u,i})=d$.
    Since $\dist_G(v,u)\ge 1$ for $u\neq v$, we obtain $\dist_{G^\prime}(v,w_{u,i})\ge d+1$.
    Therefore no gadget vertex of color $i$ lies within distance $d$ of $v$, and the recoloring step remains valid in $G^\prime$.

    Conversely, any valid list-recoloring sequence of $(G,\alpha,\beta,L)$ extends to $G^\prime$ by keeping all vertices of each $F_v$ fixed.
\end{proof}

To conclude this section, we show that the palette size $k^\prime$ in our reduction is asymptotically tight up to constant factors.
\begin{lemma}\label{lem:dkcol-color-lb}
{Assume $d\geq 3$ and let $F_v$ be the gadget constructed above for some $v\in V(G)$.
Then $\diam(F_v)\le d$, and hence every two distinct vertices of $F_v$ are at distance at most $d$.
Consequently, in any $(d,\ell)$-coloring of $F_v$ all vertices must receive pairwise distinct colors, and therefore $\ell\ge |V(F_v)| = k^\prime$.
In particular, within the present list-to-nonlist reduction scheme, the palette size $k^\prime = k+2+n(\lceil d/2\rceil-1)=\Theta(nd+k)$ in \cref{thm:coltolistcol} is asymptotically tight up to constant factors.}
\end{lemma}
\begin{proof}
{By \cref{lem:frozengraph} we have $\diam(F_v)\le d$.
Thus for any two distinct vertices $x,y\in V(F_v)$ we have $\dist_{F_v}(x,y)\le d$, and a $(d,\ell)$-coloring cannot assign the same color to both $x$ and $y$.
Hence at least $|V(F_v)|$ colors are required on $F_v$.
Finally, our construction colors $F_v$ in a rainbow fashion using all $k^\prime$ colors, and therefore $|V(F_v)|=k^\prime=k+2+n(\lceil d/2\rceil-1)$.}
\end{proof}

\subsection{\textsc{Sliding Tokens}}\label{sec:sliding-tokens}

In this section, we first revisit a variant of \ST used by Bonsma and Cereceda~\cite{BonsmaC09} and then describe and prove
\PSPACE-completeness of our restricted variant. In particular, this will help prove the following theorem.

\begin{restatable}{theorem}{thmrstpspacec}\label{thm:rst-pspacec}
    \ST is \PSPACE-complete on graphs that are planar and $2$-degenerate.
\end{restatable}

\subsubsection{Bonsma and Cereceda's \textsc{Sliding Tokens} Variant}
In a graph $G$, a \textit{valid token configuration} is a set of vertices on which tokens are placed such that no two tokens are either
on the
same or adjacent vertices, i.e., each token configuration forms an \textit{independent set} of $G$.
A \textit{move} (or \textit{$\mathsf{TS}$-move}) between two token configurations of $G$ involves sliding a single
token from one vertex to one
of its (unoccupied) neighbors.
A move must always result in a valid token configuration.
Given a graph $G$ and two valid token configurations $T_A, T_B$, the \ST problem, first introduced by Hearn and Demaine~\cite{HearnD05}, asks if there is a
sequence of moves transforming $T_A$ into $T_B$.
Such a sequence, if it exists, is called a \textit{$\mathsf{TS}$-sequence} in $G$ between $T_A$ and $T_B$.

Bonsma and Cereceda~\cite{BonsmaC09} show that \ST is \PSPACE-complete even when restricted to the following set of $(G, T_A, T_B)$
instances.
For a more detailed explanation, we refer readers to the PhD thesis of Cereceda~\cite{Cereceda2007}.

\begin{itemize}
    \item The graph $G$ has three types of gadgets: \textit{token triangles} (a copy of $K_3$), \textit{token edges} (a copy of $K_2$), and
          \textit{link edges} (a copy of $K_2$).
          Token triangles and token edges are all mutually disjoint.
          They are joined together by link edges in such a way that every vertex of $G$ belongs to exactly one token triangle or one token
          edge.
          Moreover, every vertex in a token triangle has degree $3$, and $G$ has a planar embedding where every token triangle bounds a
          face. The graph $G$ has maximum degree $3$ and minimum degree $2$.
    \item The token configurations $T_A$ and $T_B$ are such that every token triangle and every token edge contains exactly one token on
          one of their vertices.
\end{itemize}

Valid token configurations where every token triangle and every token edge contains exactly one token on one of their vertices are called
\textit{standard token configurations}.
Thus, both $T_A$ and $T_B$ are standard.
One can verify that in any $\mathsf{TS}$-sequence in $G$ starting from $T_A$ or $T_B$, no token ever leaves its corresponding token
triangle/edge.

We define the \textit{degree} of a gadget as the number of gadgets of other types sharing exactly one common vertex with it.
By definition, a token triangle in $G$ has degree exactly $3$ because there are exactly three link edges, each of which shares a common vertex with it.
Each token edge has degree between $2$ and $4$ because any endpoint of the token edge has at most two link edges incident to that endpoint.
Two link edges may share a common vertex.
However, when calculating the degree of a link edge, we only count the number of token triangles/token edges sharing
exactly one common vertex with it and ignore any other link edge having the same property.
Since all token triangles and token edges are mutually disjoint, a link edge always has degree exactly $2$.

\subsubsection{Our \ST Variant}

In our \ST variant, we modify each instance in the above set using the following rules.

\begin{enumerate}[label=(R\arabic*)]
    \item For a single token edge of degree $4$, replace that token edge by two new token edges joined together by a single link edge.\label{item:rule:deg4}
    \item For a single link edge joining vertices of two degree $3$ gadgets, replace that link edge by two new link edges joined together
          by a single token edge.\label{item:rule:2deg3}
\end{enumerate}

We perform~\ref{item:rule:deg4} and then~\ref{item:rule:2deg3} sequentially: First we apply~\ref{item:rule:deg4} on the original graph repeatedly until no token edge of degree $4$ exists
in the resulting graph. We then continue by
applying~\ref{item:rule:2deg3} repeatedly until no link edge joining vertices of two degree $3$ gadgets remain in the resulting graph.
In each case, new tokens are appropriately added to ensure that the resulting token configuration is standard.
Additionally, if a vertex in the original graph does (not) have a token on it, then in the newly constructed graph, it does (not) too.

Let's call the final new corresponding instance $(G^\prime, T_A^\prime, T_B^\prime)$.
We note that after these modifications, each token triangle has degree exactly $3$ and each token edge has degree either $2$ or $3$. Moreover,
each token triangle or token edge of degree $3$ has a link edge to at least one token edge of degree $2$.
As $G$ is planar, the graph $G^\prime$ is planar too.
Additionally, from the modification, $G^\prime$ has maximum degree $3$ and minimum degree $2$, and both $T_A^\prime$ and $T_B^\prime$ are
standard token configurations.
One can readily verify that in any $\mathsf{TS}$-sequence in $G^\prime$ starting from $T_A^\prime$ or $T_B^\prime$, no token ever leaves
its corresponding token triangle/edge.

    \begin{figure}[!ht]
        \centering
        \begin{tikzpicture}[scale=1.0, every node/.style={circle, thick, draw, minimum size=5mm, transform shape}, token/.style={circle, thick, draw, minimum size=3mm, fill=black, transform shape}]
            \begin{scope}[shift={(0,0)}]
                \foreach \n/\x/\y in {T11/0/0, T12/2/0}
                    {
                        \node (\n) at (\x, \y) {};
                    }
                \draw[ultra thick] (T11) -- (T12);
                \draw (T11) -- ([shift={(-1,0.5)}] T11.center) (T11) -- ([shift={(-1,-0.5)}] T11.center) (T12) -- ([shift={(1,0.5)}] T12.center) (T12) -- ([shift={(1,-0.5)}] T12.center);
                \node [label=below:$u$] at (T11) {};
                \node [label=below:$v$] at (T12) {};

                \node[rectangle, text width=4cm, align=center, draw=none] at (1, -1.5) {Before};

                \node[token] at (T11) {};
            \end{scope}
            \begin{scope}[shift={(6,0)}]
                \foreach \n/\x/\y in {T11/0/0, T12/2/0, N1/0/1, N2/2/1}
                    {
                        \node (\n) at (\x, \y) {};
                    }
                \draw[ultra thick, dotted] (T11) -- (T12);
                \draw[ultra thick] (T11) -- (N1) (T12) -- (N2);
                \draw (T11) -- ([shift={(-1,0.5)}] T11.center) (T11) -- ([shift={(-1,-0.5)}] T11.center) (T12) -- ([shift={(1,0.5)}] T12.center) (T12) -- ([shift={(1,-0.5)}] T12.center) (N1) -- (N2);
                \node [label=below:$u$] at (T11) {};
                \node [label=below:$v$] at (T12) {};
                \node [label=above:$u^\prime$] at (N1) {};
                \node [label=above:$v^\prime$] at (N2) {};

                \node[rectangle, text width=4cm, align=center, draw=none] at (1, -1.5) {After};

                \node[token] at (T11) {};
                \node[token] at (N2) {};
            \end{scope}
        \end{tikzpicture}
        \caption{Rule~\ref{item:rule:deg4}\label{fig:rule1} applied to a link edge of degree $4$}
    \end{figure}

    \begin{figure}[!ht]
        \centering
        \begin{tikzpicture}[scale=0.8, every node/.style={circle, thick, draw, minimum size=5mm, transform shape}, token/.style={circle, thick, draw, minimum size=3mm, fill=black, transform shape}]
            \begin{scope}[shift={(0, 0)}]
                \foreach \n/\x/\y in {T11/0/0, T12/-1/1, T13/-1/-1, T21/2/0, T22/3/1, T23/3/-1}
                    {
                        \node (\n) at (\x, \y) {};
                    }
                \draw[ultra thick] (T11) -- (T12) -- (T13) -- (T11) (T21) -- (T22) -- (T23) -- (T21);
                \draw (T12) -- ([shift={(-1,0.5)}] T12.center) (T13) -- ([shift={(-1,-0.5)}] T13.center) (T22) -- ([shift={(1,0.5)}] T22.center) (T23) -- ([shift={(1,-0.5)}] T23.center);
                \draw (T11) -- (T21);
                \node [label=below:$u$] at (T11) {};
                \node [label=below:$v$] at (T21) {};

                \node[rectangle, text width=4cm, align=center, draw=none] at (1, -2) {Before};

                \node[token] at (T11) {};
                \node[token] at (T22) {};
            \end{scope}
            \begin{scope}[shift={(7, 0)}]
                \foreach \n/\x/\y in {T11/0/0, T12/-1/1, T13/-1/-1, T21/2/0, T22/3/1, T23/3/-1, N1/0.5/1, N2/1.5/1}
                    {
                        \node (\n) at (\x, \y) {};
                    }
                \draw[ultra thick] (T11) -- (T12) -- (T13) -- (T11) (T21) -- (T22) -- (T23) -- (T21) (N1) -- (N2);
                \draw (T12) -- ([shift={(-1,0.5)}] T12.center) (T13) -- ([shift={(-1,-0.5)}] T13.center) (T22) -- ([shift={(1,0.5)}] T22.center) (T23) -- ([shift={(1,-0.5)}] T23.center) (T11) -- (N1) (T21) -- (N2);
                \draw[dotted] (T11) -- (T21);
                \node [label=below:$u$] at (T11) {};
                \node [label=below:$v$] at (T21) {};
                \node [label=above:$u^\prime$] at (N1) {};
                \node [label=above:$v^\prime$] at (N2) {};

                \node[rectangle, text width=4cm, align=center, draw=none] at (1, -2) {After};

                \node[token] at (T11) {};
                \node[token] at (T22) {};
                \node[token] at (N2) {};
            \end{scope}
        \end{tikzpicture}
        \caption{Rule~\ref{item:rule:2deg3} applied to a link edge joining two degree $3$ token triangles\label{fig:rule2c1}}
    \end{figure}
    \begin{figure}[!ht]
        \centering
        \begin{tikzpicture}[scale=1.0, every node/.style={circle, thick, draw, minimum size=5mm, transform shape}, token/.style={circle, thick, draw, minimum size=3mm, fill=black, transform shape}]
            \begin{scope}[shift={(0,0)}]
                \foreach \n/\x/\y in {T11/0/0, T12/3/0, N1/1/0, N2/2/0}
                    {
                        \node (\n) at (\x, \y) {};
                    }
                \draw[ultra thick] (T11) -- (N1) (T12) -- (N2);
                \draw (T11) -- ([shift={(-1,0.5)}] T11.center) (T11) -- ([shift={(-1,-0.5)}] T11.center) (T12) -- ([shift={(1,0)}] T12.center) (N2) -- ([shift={(1,-0.7)}] N2.center) (N1) -- (N2);
                \node [label=below:$u$] at (N1) {};
                \node [label=below:$v$] at (N2) {};

                \node[rectangle, text width=4cm, align=center, draw=none] at (1, -1.5) {Before};

                \node[token] at (T11) {};
                \node[token] at (N2) {};
            \end{scope}
            \begin{scope}[shift={(6,0)}]
                \foreach \n/\x/\y in {T11/0/0, T12/3/0, N1/1/0, N2/2/0, N3/1/1, N4/2/1}
                    {
                        \node (\n) at (\x, \y) {};
                    }
                \draw[ultra thick] (T11) -- (N1) (T12) -- (N2) (N3) -- (N4);
                \draw (T11) -- ([shift={(-1,0.5)}] T11.center) (T11) -- ([shift={(-1,-0.5)}] T11.center) (T12) -- ([shift={(1,0)}] T12.center) (N2) -- ([shift={(1,-0.7)}] N2.center) (N1) -- (N3) (N2) -- (N4);
                \draw[dotted] (N1) -- (N2);
                \node [label=below:$u$] at (N1) {};
                \node [label=below:$v$] at (N2) {};
                \node [label=above:$u^\prime$] at (N3) {};
                \node [label=above:$v^\prime$] at (N4) {};

                \node[rectangle, text width=4cm, align=center, draw=none] at (1, -1.5) {After};

                \node[token] at (T11) {};
                \node[token] at (N2) {};
                \node[token] at (N3) {};
            \end{scope}
        \end{tikzpicture}
        \caption{Rule~\ref{item:rule:2deg3} applied to a link edge joining two degree $3$ token edges\label{fig:rule2c2}}
    \end{figure}
    \begin{figure}[!ht]
        \centering
        \begin{tikzpicture}[scale=1.0, every node/.style={circle, thick, draw, minimum size=5mm, transform shape}, token/.style={circle, thick, draw, minimum size=3mm, fill=black, transform shape}]
            \begin{scope}[shift={(0, 0)}]
                \foreach \n/\x/\y in {T11/0/0, T12/1/0, T21/2/0, T22/3/1, T23/3/-1}
                    {
                        \node (\n) at (\x, \y) {};
                    }
                \draw[ultra thick] (T11) -- (T12) (T21) -- (T22) -- (T23) -- (T21);
                \draw (T12) -- (T21) (T11) -- ([shift={(-1,0.5)}] T11.center) (T11) -- ([shift={(-1,-0.5)}] T11.center) (T22) -- ([shift={(1,0.5)}] T22.center) (T23) -- ([shift={(1,-0.5)}] T23.center);
                \node [label=below:$u$] at (T12) {};
                \node [label=below:$v$] at (T21) {};

                \node[rectangle, text width=4cm, align=center, draw=none] at (1, -2) {Before};

                \node[token] at (T12) {};
                \node[token] at (T22) {};
            \end{scope}
            \begin{scope}[shift={(6, 0)}]
                \foreach \n/\x/\y in {T11/0/0, T12/1/0, T21/2/0, T22/3/1, T23/3/-1, N1/1/1, N2/2/1}
                    {
                        \node (\n) at (\x, \y) {};
                    }
                \draw[ultra thick] (T11) -- (T12) (T21) -- (T22) -- (T23) -- (T21) (N1) -- (N2);
                \draw (T11) -- ([shift={(-1,0.5)}] T11.center) (T11) -- ([shift={(-1,-0.5)}] T11.center) (T22) -- ([shift={(1,0.5)}] T22.center) (T23) -- ([shift={(1,-0.5)}] T23.center) (T12) -- (N1) (N2) -- (T21);
                \draw[dotted] (T12) -- (T21);
                \node [label=below:$u$] at (T12) {};
                \node [label=below:$v$] at (T21) {};
                \node [label=above:$u^\prime$] at (N1) {};
                \node [label=above:$v^\prime$] at (N2) {};

                \node[rectangle, text width=4cm, align=center, draw=none] at (1, -2) {After};

                \node[token] at (T12) {};
                \node[token] at (T22) {};
                \node[token] at (N2) {};
            \end{scope}
        \end{tikzpicture}
        \caption{Rule~\ref{item:rule:2deg3} applied to a link edge joining a degree $3$ token edge and a degree $3$ token triangle\label{fig:rule2c3}}
    \end{figure}
    We are now ready to show that \ST remains \PSPACE-complete.
    Observe that our described construction can be done in polynomial time: each of the rules~\ref{item:rule:deg4}
    and~\ref{item:rule:2deg3} ``touches'' a token edge/link edge at most once.
    Thus, it remains to show that our construction is a valid reduction from the \ST variant used by
    Bonsma and Cereceda~\cite{BonsmaC09} to our variant.

\begin{figure}[ht]
    \centering
    \includegraphics[width=0.7\textwidth]{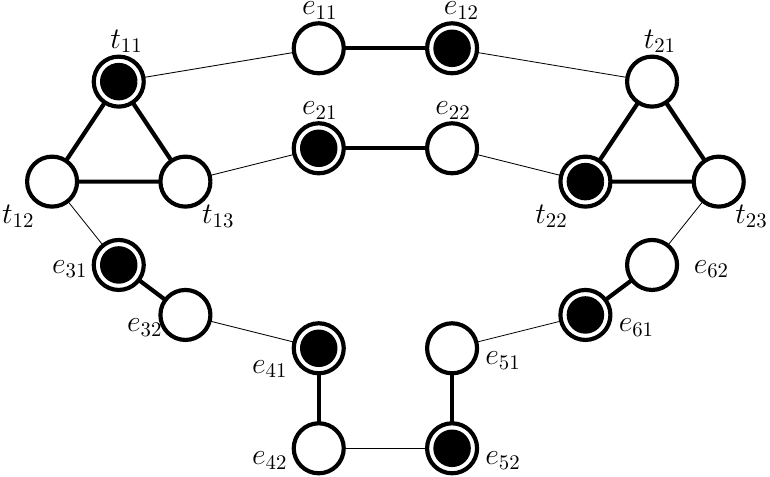}
    \caption{An example of a \ST's instance.}\label{fig:restricted-ST}
\end{figure}

    \begin{lemma}\label{lem:R1-valid}
        Let $(G, T_A, T_B)$ and $(G^1, T_A^1, T_B^1)$ be respectively the instances of \ST before and after applying~\ref{item:rule:deg4}.
        Then $(G, T_A, T_B)$ is a yes-instance if and only if $(G^1, T_A^1, T_B^1)$ is a yes-instance.
    \end{lemma}
    \begin{proof}
        Let $uv$ be some token edge of degree $4$ that is removed when
        applying~\ref{item:rule:deg4}, that is, we replace $uv$ by the path $uu^\prime v^\prime v$ where $u^\prime, v^\prime$ are newly
        added
        vertices, $uu^\prime$ and $vv^\prime$ are token edges, and $u^\prime v^\prime$ is a link edge.
        (For example, see~\cref{fig:rule1}.)
        Observe that $T_A \subset T^1_A$ and $T_B \subset T^1_B$.
        Here we use a convention that $G^1$ is constructed from $G$ by replacing the edge $uv$ by a path of length $3$.
        We note that $u \in T_A$ implies that $u, v^\prime$ are in $T^1_A$ while $u^\prime, v$ are
        not and similarly $v \in T_A$ implies that $u^\prime, v$ are in $T^1_A$ while $u, v^\prime$ are not.

        Let $\mathcal{S}$ be a $\mathsf{TS}$-sequence in $G$ between $T_A$ and $T_B$.
        We construct a $\mathsf{TS}$-sequence $\mathcal{S}^1$ in $G^1$ between $T_A^1$ and $T_B^1$ from $\mathcal{S}$ as follows.
        We replace any move $u \to v$ in $\mathcal{S}$ by the ordered sequence of moves $\langle v^\prime \to v, u \to u^\prime \rangle$
        and $v \to u$ by
        $\langle v \to v^\prime, u^\prime \to u \rangle$.
        By the construction, since the move $u \to v$ results in a new independent set of $G$, so does each member in
        $\langle v^\prime \to v, u \to u^\prime \rangle$ of $G^1$.
        More precisely, since $u \to v$ can be applied in $G$, so does $v^\prime \to v$ in $G^1$.
        After the move $v^\prime \to v$, the move $u \to u^\prime$ can be performed as no token is placed on a neighbor of $u^\prime$
        other than the one on $u$.
        Similar arguments hold for $v \to u$ and the sequence $\langle v \to v^\prime, u^\prime \to u \rangle$.
        Thus, $\mathcal{S}^1$ is indeed a $\mathsf{TS}$-sequence $\mathcal{S}^1$ in $G^1$ between $T_A^1$ and $T_B^1$.

        {
        Now, let $\mathcal{S}^1=\langle I_0, I_1, \dots, I_t\rangle$ be a $\mathsf{TS}$-sequence in $G^1$ between $T^1_A$ and $T^1_B$.
        We define a projection $\pi$ from standard token configurations of $G^1$ to standard token configurations of $G$ as follows: for every standard configuration $I$ of $G^1$, let
        \[
        \pi(I)\;:=\;\bigl(I\cap (V(G)\setminus\{u,v\})\bigr)\ \cup\ \{\,u\ \text{if}\ u\in I,\ \text{and}\ v\ \text{otherwise}\,\}.
        \]
        (Equivalently, $\pi(I)$ places the unique token of the original token edge $uv$ at $u$ if and only if the token on the token edge $uu^\prime$ is on $u$; otherwise it is placed at $v$.)

        Since $I$ is standard in $G^1$, exactly one of $u$ and $u^\prime$ is occupied.
        Moreover, if $u^\prime$ is occupied then $v^\prime$ is not (as $u^\prime v^\prime$ is a link edge), and hence the token on the token edge $vv^\prime$ must be on $v$.
        It follows that $\pi(I)$ contains exactly one of $u$ and $v$, and $\pi(I)$ is an independent set of $G$.

        Consider the projected sequence $J_0, J_1, \dots, J_t$ where $J_i:=\pi(I_i)$.
        If $J_{i-1}=J_i$, we ignore this step.
        Otherwise, $I_{i-1}$ and $I_i$ differ by a single $\mathsf{TS}$-move in $G^1$ and $J_{i-1}\neq J_i$.
        There are two cases.

        \begin{itemize}
        \item The moved token lies on a vertex of $V(G)\setminus\{u,v\}$.
        Then the same slide is an edge of $G$ (the modification only replaced the edge $uv$), and hence $J_{i-1}\to J_i$ is a valid $\mathsf{TS}$-move in $G$.

        \item The moved token lies on the token edge $uu^\prime$, i.e., the move is $u\to u^\prime$ or $u^\prime\to u$.
        If the move is $u\to u^\prime$, then $u\in I_{i-1}$ and $u^\prime\in I_i$.
        For $u\to u^\prime$ to be valid in $G^1$, the neighbor $v^\prime$ of $u^\prime$ must be unoccupied in $I_{i-1}$, and thus the token on $vv^\prime$ is on $v$ in $I_{i-1}$.
        Consequently, $J_{i-1}$ contains $u$ and $J_i$ contains $v$, so $J_{i-1}\to J_i$ is exactly the slide $u\to v$ along the original token edge $uv$ in $G$.
        Since $I_{i-1}$ is independent in $G^1$ and contains $v$, no neighbor of $v$ in $G$ is occupied, hence the slide $u\to v$ is valid in $G$.
        The case $u^\prime\to u$ is symmetric and corresponds to the slide $v\to u$ in $G$.
        \end{itemize}

        After deleting consecutive duplicates from $J_0,\dots,J_t$, we obtain a $\mathsf{TS}$-sequence in $G$ from $\pi(T^1_A)=T_A$ to $\pi(T^1_B)=T_B$.
}
    \end{proof}

    \begin{lemma}\label{lem:R2-valid}
        Let $(G^1, T_A^1, T_B^1)$ and $(G^2, T_A^2, T_B^2)$ be respectively the instances of \ST before and after applying~\ref{item:rule:2deg3}.
        Suppose that $G^1$ has no token edge of degree $4$.
        Then $(G^1, T_A^1, T_B^1)$ is a yes-instance if and only if $(G^2, T_A^2, T_B^2)$ is a yes-instance.
    \end{lemma}
    \begin{proof}
        Observe that applying~\ref{item:rule:2deg3} does not result in any new token edge of degree $4$.
        Let $uv$ be some link edge joining two degree $3$ gadgets on which~\ref{item:rule:2deg3} is applied, that is, we replace $uv$ by the path
        $uu^\prime v^\prime v$ where $u^\prime, v^\prime$ are newly added vertices, $uu^\prime$ and $v^\prime v$ are link edges, and
        $u^\prime v^\prime$ is a token edge.
        (For example, see \cref{fig:rule2c1,fig:rule2c2,fig:rule2c3}.)
        Observe that $T_A^1 \subset T_A^2$ and $T_B^1 \subset T_B^2$.
        Here we use a convention that $G^2$ is constructed from $G^1$ by replacing the edge $uv$ by a path of length $3$.
        We note that $u \in T_A^1$ implies that $u, v^\prime$ are in $T_A^2$ while $u^\prime, v$ are not and similarly $v \in T_A^1$
        implies
        that $u^\prime, v$ are in $T_A^2$ while $u, v^\prime$ are not.

        {
        Now, let $\mathcal{S}^2=\langle I_0, I_1, \dots, I_t\rangle$ be a $\mathsf{TS}$-sequence in $G^2$ between $T_A^2$ and $T_B^2$.
        Define $\pi(I):= I\cap V(G^1)$, i.e., we forget the two new vertices $u^\prime$ and $v^\prime$.

        We first show that no move of $\mathcal{S}^2$ uses the link edges $uu^\prime$ or $vv^\prime$, and hence the only moves involving $\{u^\prime,v^\prime\}$ are slides along the token edge $u^\prime v^\prime$.
        Indeed, throughout $\mathcal{S}^2$ every token triangle and every token edge contains exactly one token (as $\mathcal{S}^2$ starts from a standard token configuration and no valid token-slide can traverse a link edge into a gadget that already contains a token adjacent to the entry vertex).
        In particular, the gadget containing $u$ contains a token at all times. If $u\notin I_{i}$, then the unique token of $u$'s gadget occupies a neighbor of $u$ (since token triangles are cliques and token edges are edges), and therefore $u$ cannot be the target of a valid slide at step $i$. Thus, no move $u^\prime\to u$ can occur.
        If $u\in I_i$, then $u^\prime\notin I_i$ (since $uu^\prime$ is a link edge), so the unique token on the token edge $u^\prime v^\prime$ must be on $v^\prime$. Hence $u^\prime$ is adjacent to an occupied vertex ($v^\prime$), and therefore no move $u\to u^\prime$ can occur.
        The same argument applies symmetrically to the link edge $vv^\prime$. 
        Consequently, the only possible move that touches $u^\prime$ or $v^\prime$ is the slide along the token edge $u^\prime v^\prime$.

        Moreover, for every configuration $I$ reachable from $T_A^2$, the vertices $u$ and $v$ are never simultaneously occupied:
        if $u\in I$ then $u^\prime\notin I$, so the token on $u^\prime v^\prime$ must be on $v^\prime$, which is adjacent to $v$, implying $v\notin I$ (and symmetrically).
        Consequently, $\pi(I)$ is an independent set of $G^1$ (in particular, it respects the original link edge $uv$).

        Consider the projected sequence $J_0,\dots,J_t$ where $J_i:=\pi(I_i)$.
        If $I_{i-1}\to I_i$ is the move $u^\prime\leftrightarrow v^\prime$, then $J_{i-1}=J_i$.
        Otherwise, the moved token lies on $V(G^1)$ and the same slide is an edge of $G^1$ (all gadgets of $G^1$ are preserved in $G^2$), hence $J_{i-1}\to J_i$ is a valid $\mathsf{TS}$-move in $G^1$.
        After deleting consecutive duplicates from $J_0,\dots,J_t$, we obtain a $\mathsf{TS}$-sequence in $G^1$ from $\pi(T_A^2)=T_A^1$ to $\pi(T_B^2)=T_B^1$.
        }
    \end{proof}

    Combining \cref{lem:R1-valid,lem:R2-valid}, we have the following.
    \begin{lemma}\label{lem:R1R2-valid}
    Let $(G,T_A,T_B)$ be an instance of \ST. Let $(G^\prime,T_A^\prime,T_B^\prime)$ be the instance obtained by exhaustively applying Rules~\ref{item:rule:deg4} and~\ref{item:rule:2deg3} as described above (adding tokens so that each intermediate instance remains standard).
    Then $(G,T_A,T_B)$ is a yes-instance if and only if $(G^\prime,T_A^\prime,T_B^\prime)$ is a yes-instance.
    \end{lemma}
    \begin{proof}
    Consider the sequence of intermediate instances produced by the exhaustive procedure:
    \[
    (G_0,T_A^{(0)},T_B^{(0)}),\ (G_1,T_A^{(1)},T_B^{(1)}),\ \dots,\ (G_m,T_A^{(m)},T_B^{(m)}),
    \]
    where $(G_0,T_A^{(0)},T_B^{(0)})=(G,T_A,T_B)$ and $(G_m,T_A^{(m)},T_B^{(m)})=(G^\prime,T_A^\prime,T_B^\prime)$, and each transition applies exactly one rule once.
    If $G_{i+1}$ is obtained from $G_i$ by applying Rule~\ref{item:rule:deg4}, then yes-instances are preserved by \cref{lem:R1-valid}.
    If $G_{i+1}$ is obtained from $G_i$ by applying Rule~\ref{item:rule:2deg3}, then (since Rule~\ref{item:rule:deg4} has already been exhausted and no degree-$4$ token edge is created) yes-instances are preserved by \cref{lem:R2-valid}.
    Chaining these equivalences over $i=0,\dots,m-1$ yields the claim.
    \end{proof}

    Combining our construction and \cref{lem:R1R2-valid}, we are now ready to prove \cref{thm:rst-pspacec}.

\thmrstpspacec*

\begin{proof}
    Let $(G, T_A, T_B)$ be an instance of \textsc{\ST} and $(G^\prime, T^\prime_A, T^\prime_B)$ be the corresponding instance of \ST.
    Our construction and \cref{lem:R1R2-valid} imply the \PSPACE-completeness.
    From our construction, since the input graph $G$ is planar, so is the constructed graph $G^\prime$.
    
    {
    Additionally, we show that $G^\prime$ is $2$-degenerate.
    Recall that (by construction) $G^\prime$ has maximum degree~$3$.
    It therefore suffices to show that no induced subgraph of $G^\prime$ can have minimum degree at least~$3$.

    Suppose for contradiction that $X$ is a nonempty induced subgraph of $G^\prime$ with $\delta(X)\ge 3$.
    Since $\Delta(G^\prime)\le 3$, every vertex of $X$ has degree exactly~$3$ in $G^\prime$, and in particular every neighbor of any vertex of $X$ in $G^\prime$ also belongs to $X$.

    Let $x\in V(X)$.
    If $x$ is a degree-$3$ vertex in a token edge gadget, then by construction all neighbors of $x$ in $G^\prime$ have degree at most~$2$ in $G^\prime$.
    As these neighbors also lie in $X$, at least one vertex of $X$ has degree at most~$2$ in $X$, contradicting $\delta(X)\ge 3$.

    Otherwise, $x$ is a vertex of a token triangle.
    In that case, $x$ has two neighbors inside the triangle and one neighbor $y$ outside the triangle.
    By construction, this outside neighbor $y$ has degree at most~$2$ in $G^\prime$.
    As argued above, $y\in V(X)$, and hence $y$ has degree at most~$2$ in $X$, again contradicting $\delta(X)\ge 3$.

    This contradiction shows that every induced subgraph of $G^\prime$ has a vertex of degree at most~$2$, and thus $G^\prime$ is $2$-degenerate.}

\end{proof}

\subsection{Reduction to List $(d, k)$-\CR}\label{sec:list-dk-col-reconf}
In this section, we describe a reduction from our variant of \ST to \textsc{List $(d, k)$-\CR}. In particular we show,

\begin{restatable}{theorem}{thmlistdkpspacec}\label{thm:listdk-pspacec}
    \textsc{List $(d, k)$-\CR} is \PSPACE-complete even on planar, bipartite and $2$-degenerate graphs, for any fixed $d \geq 2$ and $k \geq {3d+3}$ if $d$ is odd or $k\geq {3d+6}$
    if $d$ is even.
\end{restatable}

Recall that in a list $(d, k)$-coloring of a graph $G$, each vertex $v$ is associated with a list $L(v) \subseteq \{1, \dots, k\}$ of colors that it can have, and no two vertices whose distance is at most $d$ in $G$ share the same color.

\subsubsection{Forbidding Paths}
We begin by defining an analogous concept of the ``$(a, b)$-forbidding path'' defined in~\cite{BonsmaC09}.
Intuitively, in such paths, their endpoints can never at any step be respectively colored $a$ and $b$.
This is useful for simulating the behavior of token movements: both endpoints of an edge cannot have tokens simultaneously.
We augment the original definition with a set of colors $C$.

\begin{definition}\label{def:ab-forbid}
    Let $u, v$ be two vertices of a graph $G$, $d \geq 2$ and $k \geq d+5$ be fixed integers.
    Let $a, b \in \{1, 2, 3\}$ and $C \subseteq \{1,\dots,k\}\setminus \{1,2,3\}$ be a given set of colors. {(In our reduction, $|C|$ will be chosen large enough to accommodate additional buffer colors used to separate base colors near high-degree vertices.)}
    For a $uv$-path $P$ and a $(d, k)$-coloring $\alpha$ of $P$, we call $\alpha$ an $[x, y]$-coloring if $\alpha(u) = x$ and $\alpha(v) = y$.
    A \textit{$(C, a, b)$-forbidding path} from $u$ to $v$ is a $uv$-path $P$ in $G$ with color list $L$ such that both $L(u)$ and
    $L(v)$ are subsets of
    $\{1, 2, 3\}$, $a \in L(u)$, $b \in L(v)$, $\bigcup_{w \in V(P) \setminus \{u, v\}}L(w) \subseteq (C \cup \{a, b\})$, and the following two conditions are satisfied:
    \begin{enumerate}[label=(\arabic*)]
        \item An $[x, y]$-coloring exists if and only if $x \in L(u)$, $y \in L(v)$, and $(x, y) \neq (a, b)$.
              Such a pair $(x, y)$ is called \textit{admissible} for $P$.\label{def:ab-forbid:item1}

        \item If both $(x, y)$ and $(x^\prime, y)$ are admissible, then from any $[x, y]$-coloring, there exists a reconfiguration
              sequence that ends with a $[x^\prime, y]$-coloring, without ever recoloring $v$, and only recoloring $u$ in the last step.
              A similar statement holds for admissible pairs $(x, y)$ and $(x, y^\prime)$.\label{def:ab-forbid:item2}
    \end{enumerate}
\end{definition}
As in~\cite{BonsmaC09}, a $(C, a, b)$-forbidding path $P$ between vertices $u$ and $v$ serves to prevent them from simultaneously having colors $a$ and $b$, respectively. Any other combination of colors for $u$ and $v$ from their respective color lists remains valid. Furthermore, the vertices can be recolored to any permissible colors as long as they avoid the forbidden combination. It's worth noting that a $(C, a, b)$-forbidding path from $u$ to $v$ is distinct from a $(C, a, b)$-forbidding path from $v$ to $u$.

\begin{figure}[ht]
    \centering
    \includegraphics[width=0.8\textwidth]{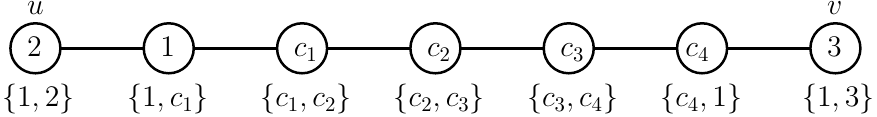}
    \caption{An example of a $(C, 1, 1)$-forbidding path $P$ between two vertices $u$ and $v$ having $L(u) = \{1, 2\}$ and $L(v) = \{1, 3\}$. Here $d = 2$, $k = 7$ ($= d + 5$), and $C = \{c_1, c_2, c_3, c_4\}$. The numbers inside the vertices of $P$ indicate a $(2, 7)$-coloring $\alpha$ of $P$ which is also a $[2, 3]$-coloring.}\label{fig:ab-forbidding-paths}
\end{figure}

In the following lemma, we demonstrate the construction of a $(C, a, b)$-forbidding path for any $d \geq 2$. The path has length $\ell$, where $\ell = d + 3$ if $d$ is odd and $\ell = d + 4$ if $d$ is even. These specific length formulations ensure that our constructed path $P$ always has even length, which is crucial for having the bipartite property in our subsequent graph construction.

\begin{lemma}\label{lem:forbidpath}
    Let $d \geq 2$ and $k \geq d+5$.
    Let $C$ be a given set of colors such that $C \cap \{1, 2, 3\} = \emptyset$ and $|C|$ is either $ d + 1$ if $d$ is odd or $d+2$ if $d$ is
    even.
    For any $L_u \subseteq \{1, 2, 3\}$, $L_v \subseteq \{1, 2, 3\}$, $a \in L_u$, and
    $b \in L_v$, there exists a $(C, a, b)$-forbidding path $P$ with $L(u) = L_u$, $L(v) = L_v$ and for any
    $w \in V(P) \setminus \{u, v\}$, $L(w) \subseteq (C \cup \{a, b\})$.
    Moreover, $P$ has length $d+3$ if $d$ is odd and $d+4$ if $d$ is even.
\end{lemma}
\begin{proof}
    Suppose that $C = \{c_1, \dots, c_p\}$ where $p$ is either $d + 1$ if $d$ is odd or $d+2$ if $d$ is even.
    We define the path $P = v_0v_1\dots v_p v_{p+1} v_{p+2}$ such that $v_0 = u$ and $v_{p+2} = v$.
    $P$ has length $p + 2$, which is equal to $(d + 1) + 2 = d + 3$ if $d$ is odd and $(d + 2) + 2 = d + 4$ if $d$ is even.
    We define the color list $L$ for each vertex of $P$ as follows.
    \begin{itemize}
        \item $L(u) = L(v_0) = L_u$, $L(v) = L(v_{p+2}) = L_v$, $L(v_1) = \{a, c_1\}$, and $L(v_{p+1}) = \{c_p, b\}$.
        \item For $2 \leq i \leq p$, $L(v_i) = \{c_{i-1}, c_i\}$.
    \end{itemize}
    We show that the path $P$ with the color list $L$ indeed form a $(C, a, b)$-forbidding path.
    It suffices to verify the conditions~\ref{def:ab-forbid:item1} and~\ref{def:ab-forbid:item2} in \cref{def:ab-forbid}.

    We first verify~\ref{def:ab-forbid:item1}.
    Suppose that $x \in L(u)$, $y \in L(v)$, and $(x, y) \neq (a, b)$.
    We describe how to construct a $[x, y]$-coloring.
    If $x = a$, we color $v_0$ by $a$, $v_{p+1}$ by $b$, $v_{p+2}$ by $y$ and $v_i$ by $c_i$ for $1 \leq i \leq p$.
    Similarly, if $y = b$, we color $v_{p+2}$ by $b$, $v_1$ by $a$, $v_0$ by $x$, and $v_i$ by $c_{i-1}$ for $2 \leq i \leq p+1$.
    If both $x \neq a$ and $y \neq b$, one possible valid coloring is to color $v_0$ by $x$, $v_1$ by $a$, $v_{p+1}$ by $b$, $v_{p+2}$ by
    $y$ and $v_i$ by $c_i$ for $2 \leq i \leq p$.

    {
    We now verify that each explicit assignment above is a list $(d,k)$-coloring of $P$.
    First, every color $c_i\in C$ is used on exactly one vertex of $P$ in each of the three constructions, so no conflict can involve any color of $C$.
    Thus, any potential conflict must involve a base color in $\{1,2,3\}$.

    In each construction, the only vertices that may carry a base color are the endpoints $v_0=u$ and $v_{p+2}=v$ and the two vertices $v_1$ and $v_{p+1}$.
    If a base color appears twice, then either it appears on the endpoints (whose distance is $p+2\ge d+3>d$), or it appears on $v_1$ and $v_{p+1}$ (which can only happen when $a=b$, in which case their distance is $p\ge d+1>d$), or it appears on an endpoint and one of $\{v_1,v_{p+1}\}$ (in which case the distance is at least $p+1\ge d+2>d$).
    Hence no two vertices at distance at most $d$ receive the same color, and the constructed assignment is a valid $(d,k)$-coloring of $P$.}

    On the other hand, suppose that a $[x, y]$-coloring of $P$ exists.
    We claim that $x \in L(u)$, $y \in L(v)$, and $(x, y) \neq (a, b)$.
    The first two conditions are followed from the definition of a $[x, y]$-coloring.
    We show that the last condition holds.
    Observe that if $u = v_0$ has color $x = a$ then we are forced to color $v_1$ by $c_1$,
    $v_2$ by $c_2$, and so on until  $v_p$ by $c_p$, and $v_{p+1}$ by $b$, which implies that the color
    of $v_{p+2} = v$ cannot be $b$; {otherwise $v_{p+1}$ is forced to be colored $b$ while $v_{p+2}=v$ is also colored $b$, and since $\dist_P(v_{p+1},v_{p+2})=1\le d$, this violates the distance-$d$ constraint.}
    Similar arguments can be applied for the case $v = v_{p+2}$ has color $y = b$.
    Thus, $(x, y) \neq (a, b)$.

    We now verify~\ref{def:ab-forbid:item2}.
    Let $(x, y)$ and $(x, y^\prime)$ be two admissible pairs.
    From~\ref{def:ab-forbid:item1}, a $[x, y]$-coloring $\alpha$ and a $[x, y^\prime]$-coloring $\beta$ of $P$ exist.
    We describe how to construct a reconfiguration sequence $\mathcal{S}$ which starts from $\alpha$, ends at $\beta$, and
    satisfies~\ref{def:ab-forbid:item2}.
    If $v_0$ has color $x = a$, then both $\alpha$ and $\beta$ have the same coloring for all vertices of $P$ except at vertex $v$ and are
    therefore adjacent $(d, k)$-colorings.
    As $(x, y^\prime)$ is an admissible pair, $y' \neq b$, hence, we can recolor $v$ from $y$ to $y'$.

    If $v_0$ has color $x \neq a$, we show that a reconfiguration sequence from $\alpha$ to
    $\beta$ exists by describing a procedure that recolors both $\alpha$ and $\beta$ to the same $[x, y^\prime]$-coloring $\gamma$ of $P$.
    The coloring $\gamma$ is constructed from any $[x, y]$-coloring of $P$ where $x \neq a$ as follows.
    First, we recolor $v_1$ by $a$ (if it was already colored $a$ then there is nothing to do).
    Notice that the only other vertex in $v_1,\dots,v_{p+1}$ which can potentially have the color $a$ is $v_{p+1}$, in the case that
    $a=b$. But as the distance between $v_1$ and $v_{p+1}$ is $p>d$, this recoloring step is valid.

    Next, we recolor $v_2$ by $c_1$ as $c_1$ is not used to color any other vertex in $P$ currently.
    We proceed with coloring every $v_i$ by color $c_{i-1}$, $3 \leq i \leq p+1$.
    Each recoloring step above is valid, as when we color $v_i$ by $c_{i-1}$, we always ensure that $c_{i-1}$ is not used to color any
    other vertex in $P$ at that time.
    Again, if $v_i$ is already colored $c_{i-1}$ then there is nothing to do.
    At the end of this process $v_{p+1}$ is colored with $c_p$.
    Finally, as the nearest vertex to $v = v_{p+2}$ which is colored $a$ is the vertex $v_1$ is at distance $p + 1 > d$ from $v$, this
    leaves us free to color $v$ with $y' \in L(v)$.
    This gives the required reconfiguration sequence from $\alpha$ to $\beta$ by combining the sequences from $\alpha$ to $\gamma$ and
    from $\beta$ to $\gamma$.
    The case for admissible pairs $(x,y)$ and $(x',y)$ is symmetric.
\end{proof}

\begin{lemma}\label{lem:forbidpath-padding}
    Let $d \ge 2$ and let $P_0$ be a $(C_0,a,b)$-forbidding path from $u$ to $v$ (in the sense of \cref{def:ab-forbid}).
    Fix an integer $r$ with $1 \le r \le d-1$, and let $D_u=\{d^u_1,\dots,d^u_r\}$ and $D_v=\{d^v_1,\dots,d^v_r\}$ be two disjoint color sets such that
    $(D_u \cup D_v)\cap(\{1,2,3\}\cup C_0)=\emptyset$.
    Construct a new $uv$-path $P$ from $P_0$ by subdividing the edge incident to $u$ by $r$ new vertices $s_1,\dots,s_r$ (in this order from $u$),
    and subdividing the edge incident to $v$ by $r$ new vertices $t_1,\dots,t_r$ (in this order towards $v$).
    Extend the list assignment of $P_0$ to $P$ by setting $L(s_i)=\{d^u_i\}$ for all $i\in\{1,\dots,r\}$ and $L(t_i)=\{d^v_i\}$ for all $i\in\{1,\dots,r\}$.
    Then $P$ is a $(C,a,b)$-forbidding path from $u$ to $v$ for $C:=C_0\cup D_u\cup D_v$.
\end{lemma}
\begin{proof}
    Since each new vertex has a singleton list, its color is fixed in every list $(d,k)$-coloring of $P$.
    Moreover, all newly introduced colors are distinct and do not belong to $\{1,2,3\}$.

    We verify Conditions~\ref{def:ab-forbid:item1} and~\ref{def:ab-forbid:item2}.
    First, observe that subdividing edges only increases graph distances among vertices of $P_0$.
    Hence, any list $(d,k)$-coloring of $P_0$ remains a valid list $(d,k)$-coloring when viewed on the corresponding vertices of $P$.
    Conversely, restricting any list $(d,k)$-coloring of $P$ to the vertices of $P_0$ yields a list $(d,k)$-coloring of $P_0$.

    For Condition~\ref{def:ab-forbid:item1}, fix $x\in L(u)$ and $y\in L(v)$.
    By the above correspondence, an $[x,y]$-coloring exists for $P$ if and only if an $[x,y]$-coloring exists for $P_0$.
    Since $P_0$ is a $(C_0,a,b)$-forbidding path, this holds if and only if $(x,y)\neq(a,b)$.

    For Condition~\ref{def:ab-forbid:item2}, let $(x,y)$ and $(x',y)$ be admissible pairs.
    Take any $[x,y]$-coloring $\alpha$ of $P$ and restrict it to the vertices of $P_0$.
    Applying Condition~\ref{def:ab-forbid:item2} in $P_0$ gives a reconfiguration sequence on $P_0$ that keeps $v$ fixed to $y$ and recolors $u$ only in the last step, ending in a coloring from which $u$ can be recolored to $x'$.
    We lift this sequence to $P$ by performing the same recolorings on the corresponding vertices of $P_0$ and leaving all new vertices fixed.
    Every intermediate coloring remains valid in $P$ because distances between vertices of $P_0$ weakly increase and the new fixed colors do not occur on any vertex of $P_0$.
    The symmetric statement for changing $v$ follows analogously.
\end{proof}

\subsubsection{Construction of Our Reduction}
We are now ready to describe our reduction.
Let $(G, T_A, T_B)$ be an instance of \ST.
We describe how to construct a corresponding instance $(G^\prime, \alpha, \beta, L)$ of \textsc{List $(d, k)$-\CR}.
We use the same notations in~\cite{BonsmaC09} to label the vertices of $G$.
The token triangles are labelled $1, \dots, n_t$, and the vertices of the triangle $i$ are $t_{i1}$, $t_{i2}$, and $t_{i3}$.
The token edges are labelled $1, \dots, n_e$, and the vertices of the token edge $i$ are $e_{i1}$ and $e_{i2}$.
To construct $G^\prime$ and $L$, we proceed as follows.

For every token triangle $i$ ($1 \leq i \leq n_t$), we introduce a vertex $t_i$ in $G^\prime$ with color list $L(t_i) = \{1, 2, 3\}$.
For every token edge $j$ ($1 \leq j \leq n_e$), we introduce a vertex $e_j$ in $G^\prime$ with color list $L(e_j) = \{1, 2\}$.
From our construction of \ST, in $G^\prime$, each $t_i$ has degree exactly three and each $e_j$ has degree either two
or three.
Whenever a link edge of $G$ joins a vertex $t_{ia}$ ($1 \leq i \leq n_t$) with a vertex $e_{jb}$ ($1 \leq j \leq n_e$) or it
joins
$e_{ia}$ ($1 \leq i \leq n_e$) with $e_{jb}$ ($1 \leq j \leq n_e$), we define $u = t_i$ and $v = e_j$ if we consider $\{t_{ia}, e_{jb}\}$, and $u = e_i$ and $v = e_j$ if we consider $\{e_{ia}, e_{jb}\}$,
and add to $G^\prime$ a $(C_{uv}, a, b)$-forbidding path
$P_{uv} = w_{uv}^0w_{uv}^1\dots w_{uv}^p$ of length $p$
between $u= w_{u v}^0$ and $v = w_{uv}^p$ in $G^\prime$,
where {$r := \lfloor d/2\rfloor$ and $p := d + 3 + 2r$ if $d$ is odd and $p := d + 4 + 2r$ if $d$ is even. (Equivalently, $p = 2d+2$ if $d$ is odd and $p = 2d+4$ if $d$ is even.)}
$C_{uv}$ is the set of exactly $p - 2$ colors which we will define later along with the color list $L$ for each vertex in $P_{uv}$.
(We remark that, unlike in~\cite{BonsmaC09}, our construction of \ST guarantees that there is no link
edge joining a $t_{ia}$ ($1 \leq i \leq n_t$) with a $t_{jb}$ ($1 \leq j \leq n_t$).)

Let $q= (p-2)/2$. By definition, $p \geq d+3 \geq 4$ and $p$ is always even, which means $q \geq 1$ and $q \in \mathbb{N}$.
For each forbidding path $P_{uv} = w_{uv}^0\dots w_{uv}^p$, we partition its vertex set into two \textit{parts}: the \textit{closer part} (from $u$ than $v$) denoted by
$\cl(P_{uv}) = \{w_{uv}^0, \dots, w_{uv}^{q+1}\}$ and
the \textit{further part} (from $u$ than $v$) denoted by $\far(P_{uv}) = \{w_{uv}^{q+1}, \dots, w_{uv}^{p}\}$.
Note that for a forbidding path $P_{uv}$, the two
\textit{parts} $\cl(P_{uv})$ and $\far(P_{uv})$ intersect at exactly
one vertex, namely $w_{uv}^{q+1}$.
Additionally, $\cl(P_{uv}) = \far(P_{vu})$ and $\far(P_{uv}) = \cl(P_{vu})$.
We say that a \textit{part} $\cl(P_{uv})$ which contains $u = w_{u v}^0$ is \textit{incident} to $u$ and similarly $\far(P_{uv})$ which
contains $v = w_{uv}^p$ is \textit{incident} to $v$.
From our construction of \ST, each $t_i$ has exactly three \textit{parts} incident to it and each $e_j$ has either two
or three \textit{parts} incident to it.
(Recall that $u, v \in \{t_i, e_j\}$.)

To construct the set $C_{uv}$ and the list $L$ for each vertex of $P_{uv}$, we will use three disjoint sets $A, B, C$ of colors.
Each set $A, B$ or $C$ is an ordered set of colors of size $q$ and has no common member with $\{1, 2, 3\}$.
For $\part \in \{\cl, \far\}$, let $f\colon \part(P_{uv}) \to \{A, B, C\}$ be a function which assigns exactly one set of colors in
$\{A, B, C\}$ to each \textit{part} of these paths $P_{uv}$ such that:
\begin{enumerate}[label=(P\arabic*)]
    \item No two \textit{parts} of the same forbidding path share the same assigned set, i.e., $f(\cl(P_{uv})) \neq f(\far(P_{uv}))$; and\label{item:forbidden-color:a}
    \item No two \textit{parts} incident to the same vertex in $G^\prime$ share the same
          assigned set, i.e., for any pair $v, v^\prime$ of $u$'s neighbors,
          $f(\cl(P_{uv})) \neq f(\cl(P_{u v^\prime}))$.\label{item:forbidden-color:b}
\end{enumerate}

In the rest of the proof, we refer to the conditions above as
conditions~\ref{item:forbidden-color:a} and~\ref{item:forbidden-color:b}
respectively. We will show later in \cref{lem:fmapping-exists} that such a function can be constructed in polynomial time.
After we use the function $f$ to assign the colors $\{A,B, C\}$ to \textit{parts} of a forbidding path $P_{u v}$, we are ready to define
$C_{uv}$.
Suppose that the ordered set $X = (x_1, \dots, x_q) \in \{A,B,C\}$ is used to color $\cl(P_{uv})$ and the ordered set $Y = (y_1, \dots, y_q) \in \{A,B,C\} \setminus X$ is used to color
$\far(P_{uv})$, that is, $X = f(\cl(P_{uv}))$ and $Y = f(\far(P_{uv}))$.
We define $C_{uv}= X \cup Y$.
Next, we define the color list $L$ for a path $P_{uv} = w_{uv}^0\dots w_{uv}^p$ (where $u = w_{uv}^0$ and $v = w_{uv}^p$) using colors $C_{uv}$, as follows.

\begin{itemize}
    \item If $u = t_i$ for some $i \in \{1, \dots, n_t\}$, define $L(u) = \{1, 2, 3\}$; otherwise (i.e., $u = e_j$ for some $j \in \{1, \dots, n_e\}$), define $L(u) = \{1, 2\}$.
          Similar definitions hold for $L(v)$.
    \item {Let $r:=\lfloor d/2\rfloor$. For each $1\le i\le r$, set $L(w_{uv}^i)=\{x_i\}$ and $L(w_{uv}^{p-i})=\{y_i\}$. (These are buffer vertices whose colors are fixed.)}
    \item {$L(w_{uv}^{r+1})=\{a,x_{r+1}\}$ and $L(w_{uv}^{p-(r+1)})=\{y_{r+1},b\}$.}
    \item {For $r+2\le i\le q$, set $L(w_{uv}^i)=\{x_{i-1},x_i\}$ and $L(w_{uv}^{p-i})=\{y_i,y_{i-1}\}$, and set $L(w_{uv}^{q+1})=\{x_q,y_q\}$.}
\end{itemize}

{By \cref{lem:forbidpath-padding} (applied to the core construction of \cref{lem:forbidpath}), each $P_{uv}$ is a $(C_{uv},a,b)$-forbidding path, and the base colors $\{1,2,3\}$ can only appear on vertices at distance at least $r+1$ from either endpoint.}

{While the buffer vertices on forbidding paths are assigned fixed colors, these colors are chosen from the palettes $A,B,C$ in a way that prevents any \emph{distance-$d$ interference} across different paths.
Indeed, if two forbidding paths share an endpoint, then Condition~(P2) forces their endpoint-incident parts to use distinct palettes, and hence their fixed colors are automatically different.
If two forbidding paths do not share an endpoint, then any walk between internal vertices of the two paths must traverse at least one complete forbidding path of length at least $2d+2$, and therefore has length greater than $d$.
The next lemma makes this separation precise and will allow us to treat recolorings performed inside a single forbidding path as \emph{local} operations in the ambient graph $G^\prime$.}

\begin{lemma}\label{lem:palette-separation}
Let $\chi\in A\cup B\cup C$ and let $x,y\in V(G^\prime)$ be two distinct vertices such that $\chi\in L(x)\cap L(y)$.
If $x$ and $y$ do not lie on the same forbidding path $P_{uv}$, then $\dist_{G^\prime}(x,y)>d$.
\end{lemma}
\begin{proof}
Since $\chi\in A\cup B\cup C$, both $x$ and $y$ are internal vertices of forbidding paths (endpoints $t_i$ and $e_j$ have lists contained in $\{1,2,3\}$).
Let $x$ lie on a forbidding path $P_{uv}$ and let $y$ lie on a forbidding path $P_{u'v'}$ with $P_{uv}\neq P_{u'v'}$.

{Assume first that the two paths share an endpoint, and let this common endpoint be $u$.
Write the two paths as $P_{uv}$ and $P_{uv'}$ with $P_{uv}\neq P_{uv'}$.
Let $X=f(\cl(P_{uv}))$ and $Y=f(\far(P_{uv}))$, and let $X'=f(\cl(P_{uv'}))$ and $Y'=f(\far(P_{uv'}))$.
By Condition~\ref{item:forbidden-color:b}, we have $X\neq X'$.
Recall that $q=(p-2)/2\ge d$, and by the definition of the lists on forbidding paths, every vertex of $P_{uv}$ whose list contains a color from $Y$ is at distance at least $q+1\ge d+1$ from $u$ (and the same holds for $P_{uv'}$ with $Y'$).

Now suppose that $\chi\in L(x)\cap L(y)$.
If $\chi\in X$, then since $X'\neq X$, any vertex of $P_{uv'}$ whose list contains $\chi$ lies in $\far(P_{uv'})$ and hence is at distance at least $q+1\ge d+1$ from $u$.
Similarly, if $\chi\in X'$, then any vertex of $P_{uv}$ whose list contains $\chi$ lies in $\far(P_{uv})$.
If $\chi\notin X\cup X'$, then both $x$ and $y$ lie in their respective further parts.
In all cases, at least one of $x$ and $y$ is at distance at least $d+1$ from $u$, while the other is an internal vertex and hence at distance at least $1$ from $u$.

Any $xy$-path in $G^\prime$ must leave $P_{uv}$ through $u$ or $v$ and enter $P_{uv'}$ through $u$ or $v'$.
If it uses $u$, then its length is at least $\dist_{G^\prime}(x,u)+\dist_{G^\prime}(u,y)\ge 1+(d+1)>d$.
Otherwise, it contains a subpath between two distinct skeleton vertices and therefore has length at least $p\ge 2d+2>d$.
Thus, in the shared-endpoint case we have $\dist_{G^\prime}(x,y)>d$.}

We may therefore assume that $P_{uv}$ and $P_{u'v'}$ do not share endpoints.
Every internal vertex of a forbidding path has degree $2$ in $G^\prime$.
Thus, any $xy$ path in $G^\prime$ must leave $P_{uv}$ through one of its endpoints and enter $P_{u'v'}$ through one of its endpoints.
Since every adjacency between skeleton vertices in $G^\prime$ is realized by a forbidding path of length $p\ge 2d+2$, we obtain
\[
\dist_{G^\prime}(x,y)\ \ge\ 1+p+1\ \ge\ 2d+4\ >\ d,
\]
as claimed.
\end{proof}

{\cref{lem:palette-separation} implies that whenever a vertex of a forbidding path is recolored to an auxiliary color in $A\cup B\cup C$, no vertex outside the path can lie within distance at most $d$ and simultaneously carry the same color.
Consequently, a reconfiguration sequence designed for a forbidding path remains valid when executed inside $G^\prime$ while keeping all other vertices fixed.}

\begin{corollary}\label{cor:path-locality}
Let $P_{uv}$ be a forbidding path of $G^\prime$.
Let $\mathcal{R}=\langle \varphi_0,\varphi_1,\dots,\varphi_m\rangle$ be a reconfiguration sequence on $P_{uv}$ such that
\begin{enumerate}[label=(\roman*)]
    \item $\varphi_i(u)=\varphi_0(u)$ and $\varphi_i(v)=\varphi_0(v)$ for all $i$, and
    \item every step recolors a vertex of $P_{uv}\setminus\{u,v\}$.
\end{enumerate}
Then $\mathcal{R}$ is also a valid reconfiguration sequence in $G^\prime$ when all vertices of $V(G^\prime)\setminus V(P_{uv})$ are kept fixed.
\end{corollary}
\begin{proof}
Consider a recoloring step that changes a vertex $x\in V(P_{uv})\setminus\{u,v\}$ to a color $\gamma\in L(x)$.
We must show that there is no vertex $y\notin V(P_{uv})$ with $\varphi(y)=\gamma$ and $\dist_{G^\prime}(x,y)\le d$ (where $\varphi$ denotes the current coloring outside the step).

If $\gamma\in A\cup B\cup C$, then any vertex $y\notin V(P_{uv})$ with color $\gamma$ satisfies $\gamma\in L(y)$, and \cref{lem:palette-separation} gives $\dist_{G^\prime}(x,y)>d$.

Assume now that $\gamma\in\{1,2,3\}$.
By construction of forbidding paths (cf.\ the discussion in the proof of \cref{lem:list-dkcol-const}), the only vertices of $G^\prime$ whose lists contain a base color are the skeleton vertices and, for each forbidding path $P_{ab}$, the two core vertices at distance exactly $r+1$ from $a$ and from $b$.
Since $x$ is an internal vertex of $P_{uv}$ and is being recolored to a base color, $x$ must be one of these two core vertices of $P_{uv}$; in particular,
\[
\min\{\dist_{G^\prime}(x,u),\dist_{G^\prime}(x,v)\}=r+1.
\]
Let $y\notin V(P_{uv})$ be any vertex with color $\gamma$.
Then $y$ is either a skeleton vertex distinct from $u,v$, or a core vertex of some other forbidding path $P_{ab}\neq P_{uv}$.
If $P_{ab}$ shares an endpoint with $P_{uv}$, say $a=u$, then $\dist_{G^\prime}(u,y)=r+1$ by the same reasoning, and therefore
\[
\dist_{G^\prime}(x,y)\ \ge\ \dist_{G^\prime}(x,u)+\dist_{G^\prime}(u,y)\ \ge\ (r+1)+(r+1)\ =\ 2(r+1)\ >\ d
\]
(using $r=\lfloor d/2\rfloor$).
Otherwise, any $x$--$y$ path must traverse at least one complete forbidding path between two distinct skeleton vertices, and hence has length at least $1+p+1>d$ because $p\ge 2d+2$.
Thus, $\dist_{G^\prime}(x,y)>d$ in all cases.

Therefore the recoloring step remains valid in $G^\prime$, and the claim follows by induction over the steps of $\mathcal{R}$.
\end{proof}

\begin{figure}[!ht]
    \centering
    \includegraphics[width=\textwidth]{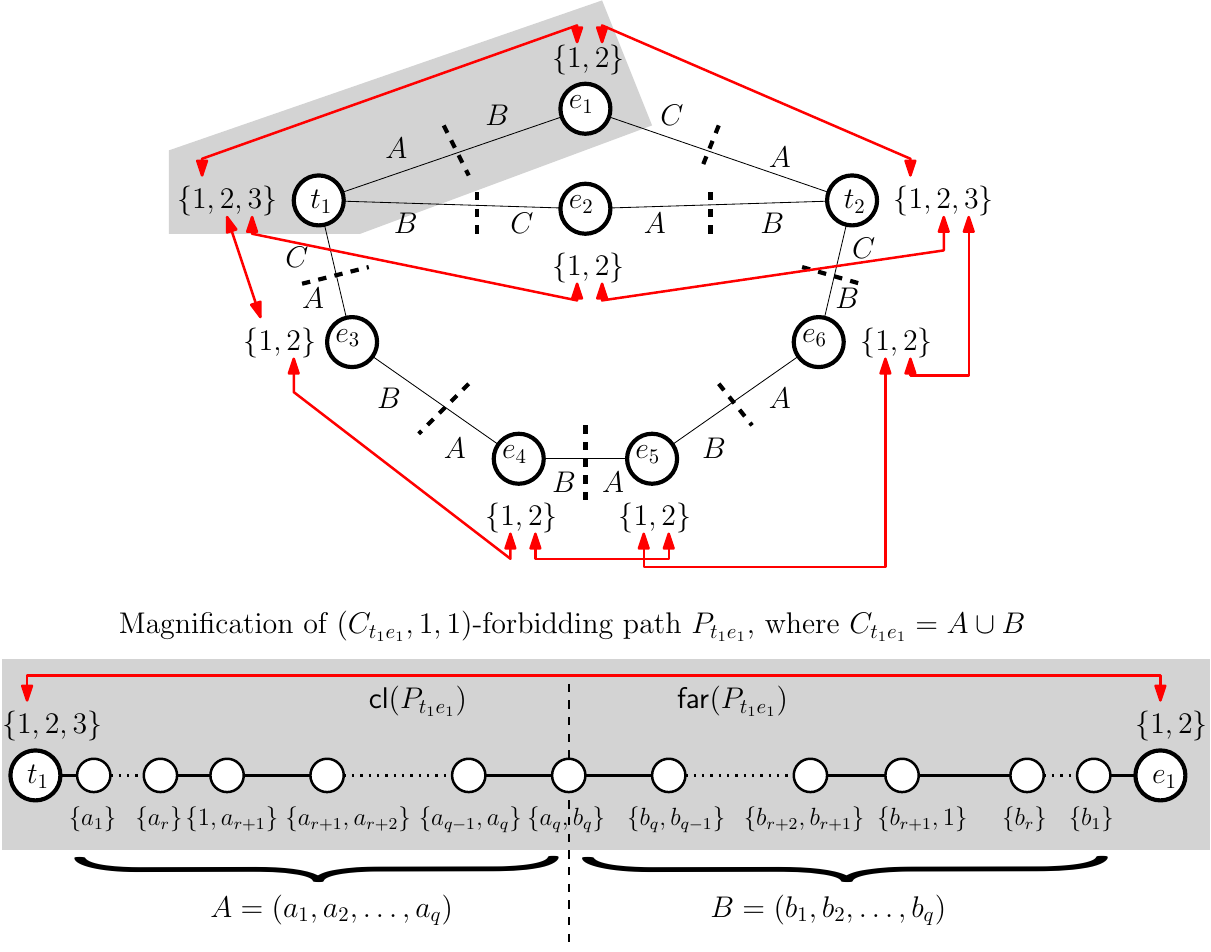}
    \caption{Construction of graph $G^\prime$ with a color list $L$ for each vertex of $G^\prime$ from the \ST's instance in \cref{fig:restricted-ST} and the color sets $\{A, B, C\}$. The
        construction of $(C_{t_1e_1}, 1, 1)$-forbidding path is described in details. Other forbidding paths are constructed similarly. The arrows at the end of red lines
        point to $(a,b)$ in a $(C_{uv}, a, b)$-forbidding path, where $u, v \in \{t_i, e_j\}$. }\label{fig:list-dk-col}
\end{figure}

Recall that given an instance $(G, T_A, T_B)$ of \ST, we need to construct an instance $(G^\prime, \alpha, \beta, L)$
of \textsc{List $(d, k)$-\CR}.
Up to the present, given $G$, one can verify that we have constructed $G^\prime$ and a color list $L$ for each vertex of $G^\prime$ in
polynomial time.

\begin{definition}\label{def:canonical-extension}
Let $P_{uv}$ be a $(C_{uv},a,b)$-forbidding path in $G^\prime$.
For every admissible pair $(a',b')\in L(u)\times L(v)\setminus\{(a,b)\}$, we fix a specific $[a',b']$-coloring of $P_{uv}$,
denoted by $\mathrm{Can}_{uv}^{a',b'}$.
More precisely, we choose $\mathrm{Can}_{uv}^{a',b'}$ as the $[a',b']$-coloring obtained by following the explicit constructive choices in the proof of \cref{lem:forbidpath}
on the core path and keeping all buffer vertices fixed as in \cref{lem:forbidpath-padding}.
\end{definition}

\begin{lemma}\label{lem:canonical-valid}
For every forbidding path $P_{uv}$ and every admissible pair $(a',b')$, the coloring $\mathrm{Can}_{uv}^{a',b'}$ is a list $(d,k)$-coloring of $P_{uv}$.
\end{lemma}
\begin{proof}
The existence of a list $(d,k)$ $[a',b']$-coloring of the core path follows from \cref{lem:forbidpath},
and \cref{lem:forbidpath-padding} extends it to the padded path by keeping the buffer vertices fixed.
\end{proof}

We now describe how to construct a \textsc{List $(d, k)$-\CR} $\alpha$ of $G^\prime$ based on $T_A$ where $k$ is ${3d+3}$ if $d$ is odd and
${3d+6}$ if $d$ is even.
For each $x \in V(G^\prime)$,
\begin{itemize}
    \item If $x = t_i$ ($1 \leq i \leq n_t$), we define $\alpha(x) = a$ if $t_{ia} \in T_A$ where $a \in \{1,2,3\}$.
          Similarly, if $x = e_j$ ($1 \leq j \leq n_e$), we define $\alpha(x) = a$ if $e_{ja} \in T_A$ where $a \in \{1, 2\}$.

    \item If $x \in V(P_{uv} \setminus \{u, v\})$ for some $(C_{uv}, a, b)$-forbidding path $P_{uv}$ of $G$, let $(a^\prime,b^\prime):=(\alpha(u),\alpha(v))$.
          {Since $(a^\prime,b^\prime)\neq(a,b)$ is an admissible pair, we define $\alpha(x):=\mathrm{Can}_{uv}^{a^\prime,b^\prime}(x)$ using the canonical $[a^\prime,b^\prime]$-coloring from \cref{def:canonical-extension}.}
\end{itemize}

We can also safely assume that all pairs $(\alpha(u), \alpha(v))$, where $u,v \in \{t_i, e_j\} \subseteq V(G^\prime)$ corresponding to
token triangles and token edges of $G$, are admissible pairs.
This follows as a direct consequence of $\alpha$ being constructed from $T_A$.
The construction of $\beta$ based on $T_B$ can be done similarly.
The following lemma confirms that $\alpha$ and $\beta$ are indeed list $(d, k)$-colorings of $G^\prime$.

\begin{lemma}\label{lem:list-dkcol-const}
    $\alpha$ is a list $(d, k)$-coloring of $G^\prime$.
    Consequently, so is $\beta$.
\end{lemma}
\begin{proof}
    {In the graph $G^\prime$, the vertices whose color lists contain a color from $\{1,2,3\}$ are
    \begin{itemize}
        \item the skeleton vertices $t_i$ and $e_j$ corresponding to token triangles and token edges of $G$, and
        \item for each forbidding path $P_{uv}$, the two core vertices $w_{uv}^{r+1}$ and $w_{uv}^{p-(r+1)}$ (with lists $\{a,x_{r+1}\}$ and $\{y_{r+1},b\}$, respectively).
    \end{itemize} 
    In particular, all buffer vertices $w_{uv}^i$ with $1\le i\le r$ or $p-r\le i\le p-1$ have singleton lists avoiding $\{1,2,3\}$.
    For every edge $uv$ whose $u$ and $v$ are skeleton vertices, $G'$ contains a forbidding path $P_{uv}$ of length $p\ge d+3$ between $u$ and $v$.
    Consequently, any two distinct skeleton vertices are at distance at least $p>d$ in $G'$ (since any path between them traverses at least one whole forbidding path).

    It remains to check that no two vertices at distance at most $d$ share a base color in the specific coloring $\alpha$.
    By construction, the only vertices that may be colored by a base color in $\alpha$ are the skeleton vertices and (depending on the endpoint colors) the two core vertices $w_{uv}^{r+1}$ and $w_{uv}^{p-(r+1)}$ on each forbidding path $P_{uv}$.
    All buffer vertices have singleton lists disjoint from $\{1,2,3\}$.

    Any two distinct skeleton vertices are at distance at least $p>d$.
    Next, consider a skeleton vertex $u$ and an incident forbidding path $P_{uv}$.
    In the canonical coloring $\mathrm{Can}_{uv}^{\alpha(u),\alpha(v)}$, the core vertex $w_{uv}^{r+1}$ is colored by the forbidden endpoint-color $a$ only when $\alpha(u)\neq a$, and it is colored by the auxiliary color $x_{r+1}$ when $\alpha(u)=a$.
    In particular, $w_{uv}^{r+1}$ never receives the same base color as $u$ in $\alpha$ (and symmetrically at the $v$-end).
    Finally, two core vertices lying on different paths incident to the same skeleton vertex are at distance at least $2(r+1)>d$, while the two core vertices of a single forbidding path are at distance
    $p-2(r+1)\in\{d+1,d+2\}>d$.
    Therefore, no two vertices within distance at most $d$ share a base color in $\alpha$.
    }

    {Next, consider any vertex $z\in V(G^\prime)$ whose list contains a color from $A\cup B\cup C$, and write $\alpha(z)=\chi$.
Then $z$ lies on some forbidding path $P_{uv}$ and $\chi\in L(z)\subseteq C_{uv}$.
By construction, $\alpha$ restricted to $P_{uv}$ is the canonical coloring $\mathrm{Can}_{uv}^{\alpha(u),\alpha(v)}$,
and hence it is a list $(d,k)$-coloring of $P_{uv}$ by \cref{lem:canonical-valid}.
In particular, no vertex of $P_{uv}$ at distance at most $d$ from $z$ is colored $\chi$.

Now let $z'\notin V(P_{uv})$ be any vertex with $\alpha(z')=\chi$.
Then necessarily $\chi\in L(z')$ and $z'$ lies on a forbidding path $P_{u'v'}\neq P_{uv}$.
\cref{lem:palette-separation} yields $\dist_{G^\prime}(z,z')>d$.
Therefore, no vertex at distance at most $d$ from $z$ shares color $\chi$ in $\alpha$.}

\end{proof}

Next, let us show how to efficiently construct such a function $f$.
\begin{lemma}\label{lem:fmapping-exists}
    Let $A,B,C$ be three disjoint sets where each set $A$, $B$ or $C$ is an ordered set of colors of size {$d$} if $d$ is odd or {$d+1$} if $d$ is even. Then we can in polynomial
    time construct
    $f\colon \part(P_{uv}) \rightarrow \{A,B,C\}$ that fulfill~\ref{item:forbidden-color:a}
    and~\ref{item:forbidden-color:b}.
\end{lemma}

\begin{proof}
We construct $f$ via a $3$-edge-coloring of an auxiliary bipartite graph.
Let $H$ be the graph obtained from the skeleton of $G^\prime$ by subdividing every forbidding path once:
the vertex set of $H$ consists of all skeleton vertices $\{t_i\mid 1\le i\le n_t\}\cup\{e_j\mid 1\le j\le n_e\}$
together with one subdivision vertex $m_{uv}$ for each forbidding path $P_{uv}$ in $G^\prime$.
For each $P_{uv}$ we add edges $um_{uv}$ and $vm_{uv}$ to $H$.

Then $H$ is bipartite (skeleton vertices on one side and subdivision vertices on the other),
and $\Delta(H)\le 3$ because every skeleton vertex has degree at most $3$ in $G^\prime$ and each $m_{uv}$ has degree $2$.
Since the edge chromatic number of any bipartite graph equals its maximum vertex degree (K\H{o}nig's line coloring theorem), every bipartite graph admits a proper edge-coloring with $\Delta(H)$ colors.
Hence $H$ has a proper edge-coloring $\varphi\colon E(H)\to\{A,B,C\}$.
Moreover, such an edge-coloring can be found in polynomial time, for example by repeatedly extracting maximum matchings.

We now define $f$ from $\varphi$.
For each forbidding path $P_{uv}$, recall that $\part(P_{uv})=\{\cl(P_{uv}),\far(P_{uv})\}$,
where $\cl(P_{uv})$ is the part incident to $u$ and $\far(P_{uv})$ is the part incident to $v$.
We set
\[
f(\cl(P_{uv})):=\varphi(um_{uv})\qquad\text{and}\qquad f(\far(P_{uv})):=\varphi(vm_{uv}).
\]

Since $\varphi$ is proper, the edges incident to any skeleton vertex receive pairwise distinct colors.
Therefore, the three parts incident to any degree-$3$ skeleton vertex are assigned pairwise distinct members of $\{A,B,C\}$,
which is Condition~\ref{item:forbidden-color:b}.
Furthermore, at each subdivision vertex $m_{uv}$ the two incident edges receive distinct colors,
so $f(\cl(P_{uv}))\neq f(\far(P_{uv}))$ for every $P_{uv}$, which is Condition~\ref{item:forbidden-color:a}.
This completes the construction of $f$.
\end{proof}

We are now ready to show the correctness of our reduction.

\begin{lemma}\label{lem:list-dkcol-correct}
    $(G, T_A, T_B)$ is a yes-instance if and only if $(G^\prime, \alpha, \beta, L)$ is a yes-instance.
\end{lemma}
\begin{proof}
    We claim that there is a $\mathsf{TS}$-sequence $\mathcal{S}$ between $T_A$ and $T_B$ in $G$ if and only if there is a sequence of
    valid recoloring steps $\mathcal{R}$ between $\alpha$ and $\beta$ in $G^\prime$.
    \begin{itemize}
        
\item[($\Rightarrow$)] Let $\mathcal{S}$ be a $\mathsf{TS}$-sequence in $G$ between $T_A$ and $T_B$.
    We describe how to construct the desired reconfiguration sequence $\mathcal{R}$ from $\mathcal{S}$.
    More precisely, for each move $x \to y$ in $\mathcal{S}$, we construct a corresponding sequence of recoloring steps in
    $\mathcal{R}$ as follows.
    From our construction of \ST, it follows that both $x$ and $y$ must be in the same token triangle or token edge.
    We consider the case $x = t_{ia}$ and $y = t_{ib}$ where $a, b \in \{1, 2, 3\}$, i.e., they are in the same token triangle
    $i \in \{1, \dots, n_t\}$.
    The other case can be handled similarly.
    In this case, corresponding to this move, we wish to recolor $t_i$ (which currently has color $a$) by $b$.

    {Let $v_1,\dots,v_\ell$ be the neighbors of $t_i$ in $G^\prime$ (so $\ell\le 3$), and for each $j$ let $P_j:=P_{t_i v_j}$.
    Let $c_j$ denote the current color of $v_j$.
    Since $t_{ia}\to t_{ib}$ is a valid $\mathsf{TS}$-move, no token occupies any vertex adjacent to $t_{ib}$ in $G$.
    By the definition of \ST and the reduction, this implies that for every $j$, the endpoint pair $(b,c_j)$ is admissible for the forbidding path $P_j$.}

    {For each $j\in\{1,\dots,\ell\}$, we apply Definition~\ref{def:ab-forbid}(2) to $P_j$ with $u=t_i$, $v=v_j$, $x=a$, $y=c_j$, and $x^\prime=b$.
    This yields a reconfiguration sequence $\mathcal{R}_j$ that recolors only vertices of $P_j$, keeps $v_j$ fixed to $c_j$,
    and recolors $t_i$ only in its last step (from $a$ to $b$).
    Let $\mathcal{R}_j^{-}$ be the prefix of $\mathcal{R}_j$ obtained by deleting this last step.
    By Corollary~\ref{cor:path-locality}, we may execute $\mathcal{R}_j^{-}$ inside $G^\prime$ while keeping all vertices outside $P_j$ fixed.
    We execute $\mathcal{R}_1^{-},\mathcal{R}_2^{-},\dots,\mathcal{R}_\ell^{-}$ one after another (keeping $t_i$ colored $a$ throughout),
    and finally recolor $t_i$ from $a$ to $b$.
    This last step is valid because for each $j$ the current restriction to $P_j$ is exactly one step before the last step of $\mathcal{R}_j$,
    and hence no vertex of $P_j$ within distance at most $d$ from $t_i$ is colored $b$.
    Moreover, any vertex of $G^\prime$ not lying on one of the paths $P_1,\dots,P_\ell$ is at distance greater than $d$ from $t_i$
    (since any path from $t_i$ to another skeleton vertex traverses a forbidding path of length $p\ge 2d+2$).
    The resulting concatenation is a valid recoloring subsequence of $\mathcal{R}$ that simulates the move $t_{ia}\to t_{ib}$.}

    Combining these sequences give us our desired sequence $\mathcal{R}$.

\item[($\Leftarrow$)] Suppose that $\mathcal{R}$ is a sequence of valid recoloring steps between $\alpha$ and $\beta$.
            We construct our desired $\mathsf{TS}$-sequence between $T_A$ and $T_B$ from $\mathcal{R}$ as follows.
            For each recoloring step in $\mathcal{R}$, we construct a corresponding $\mathsf{TS}$-move in $\mathcal{S}$, which may
            sometimes be a redundant step that reconfigures a token-set to itself.
            We maintain the invariant that after processing each recoloring step, the current token-set in $G$ is standard and is encoded by the current colors of the skeleton vertices of $G^\prime$:
            for each token triangle $i$ the token is on $t_{i\gamma(t_i)}$, and for each token edge $j$ the token is on $e_{j\gamma(e_j)}$, where $\gamma$ denotes the current coloring of $G^\prime$.
            Suppose that $v \in V(G^\prime)$ is currently recolored.
            \begin{itemize}
                \item If $v$ is in a forbidding path $P_{xy}$ and $v \notin \{x, y\}$, we add a redundant step to $\mathcal{S}$.
                \item If $v$ is either $t_i$ ($1 \leq i \leq n_t$) or $e_j$ ($1 \leq j \leq n_e$), suppose that $v$ is recolored from
                      color $a$ to color $b$, where $a,b\in L(v)$.
                      We first consider the case $v = t_i$ (so $a,b\in\{1,2,3\}$).
                      {By the invariant, the current token configuration contains $t_{ia}$ and no other vertex of triangle $i$, so $t_{ib}$ is unoccupied.
                      Let $z$ be any neighbor of $t_{ib}$ in $G$ distinct from $t_{ia}$.
                      By construction of $G$ and $G^\prime$, this adjacency $t_{ib}z$ is represented by a forbidding path $P_{t_i w}$ in $G^\prime$ whose endpoints are $t_i$ and the unique gadget-vertex $w\in\{t_{i'}:i'\neq i\}\cup\{e_{j'}\}$ containing $z$.
                      Moreover, there is a unique color $c\in\{1,2,3\}$ such that $z$ is the vertex indexed by $c$ in the gadget of $w$, and the unique forbidden endpoint pair of $P_{t_i w}$ is $(b,c)$.
                      Let $c_w:=\gamma(w)$ be the current color of $w$ after recoloring $t_i$ to $b$.
                      Since this recoloring step is valid, the resulting coloring restricts to a $[b,c_w]$-coloring of $P_{t_i w}$; hence $(b,c_w)$ is admissible by Condition~\ref{def:ab-forbid:item1}.
                      Therefore $(b,c_w)\neq(b,c)$ and thus $c_w\neq c$, which implies (by the invariant) that the token of gadget $w$ is not placed on $z$.
                      As $z$ was arbitrary, no token is adjacent to $t_{ib}$ after the slide, and thus we can slide the token on $t_{ia}$ to $t_{ib}$ and add this step to $\mathcal{S}$.}

                      {Now suppose that $v=e_j$ for some $j\in\{1,\dots,n_e\}$, so $a,b\in L(e_j)=\{1,2\}$.
                      By the invariant, the current token configuration contains $e_{ja}$ and no other vertex of the edge gadget $j$, so $e_{jb}$ is unoccupied.
                      Let $z$ be any neighbor of $e_{jb}$ in $G$ distinct from $e_{ja}$.
                      By construction of $G$ and $G^\prime$, the adjacency $e_{jb}z$ is represented by a forbidding path $P_{e_j w}$ in $G^\prime$ whose endpoints are $e_j$ and the unique gadget-vertex $w\in\{t_{i'}\}\cup\{e_{j'}:j'\neq j\}$ containing $z$.
                      Moreover, there is a unique color $c_z\in L(w)$ such that $z$ is the vertex indexed by $c_z$ in the gadget of $w$, and the unique forbidden endpoint pair of $P_{e_j w}$ is $(b,c_z)$.
                      Let $c_w:=\gamma(w)$ be the current color of $w$ after recoloring $e_j$ to $b$.
                      Since this recoloring step is valid, the resulting coloring restricts to a $[b,c_w]$-coloring of $P_{e_j w}$; hence $(b,c_w)$ is admissible and $c_w\neq c_z$.
                      Therefore the token of $w$ is not placed on $z$.
                      As $z$ was arbitrary, no token is adjacent to $e_{jb}$ after the slide, and thus we can slide the token on $e_{ja}$ to $e_{jb}$ and add this step to $\mathcal{S}$.}
            \end{itemize}
    \end{itemize}
\end{proof}

Finally, we show \cref{thm:listdk-pspacec} as follows.

\thmlistdkpspacec*

\begin{proof}
    The \PSPACE-completeness and the values of $d$ and $k$ follows from our construction and proofs above.
    From our construction, since the input graph $G$ of a \ST is planar, so is the constructed graph $G^\prime$.
    As any forbidding path has even length and $G^\prime$ no longer contains any ``token triangle'', it follows that any cycle in
    $G^\prime$ has even length, and therefore it is also bipartite.
    Additionally, we can also show that $G^\prime$ is $2$-degenerate.
    Let us prove by contradiction. Let $X$ be an induced subgraph in $G^\prime$ such that the minimum degree of any vertex in $X$ is at
    least $3$. However, by construction of $G^\prime$ we know that for any vertex $x$ of degree $3$, all its neighbors have degree $2$. If
    $x \in X$, then at least one of its neighbors also belong to $X$ by definition. Hence, $X$ has a vertex with degree at most $2$ contradicting our
    assumption.
\end{proof}

Combining our construction in \cref{thm:coltolistcol} and proofs of \cref{thm:rst-pspacec,thm:listdk-pspacec}, we finally have \cref{thm:main}.

\thmmain*

\begin{proof}
    The \PSPACE-completeness follows from our construction and proofs above.
    Using \cref{thm:listdk-pspacec,thm:rst-pspacec}, one can assume without loss of generality that the input graph $G$ of a \textsc{List $(d,k)$-\CR} instance is planar,
    bipartite, and $2$-degenerate.
    As our constructed frozen graphs (\cref{thm:coltolistcol}, $F_v$, $v \in V(G)$) are trees when $d \geq 3$, they are also planar, bipartite and $1$-degenerate.
    Additionally, when $d = 2$, the constructed frozen graphs are planar and $2$-degenerate, but not bipartite (\cref{lem:dkcol-graph-type}).
    Combining both, we obtain that our constructed graph $G^\prime$ is planar, bipartite, and $2$-degenerate when $d \geq 3$, and planar and $2$-degenerate when $d = 2$.

    Finally, by construction, in \cref{sec:list-dk-col-reconf} we produced an equivalent instance of \textsc{List $(d,k_0)$-\CR} where $k_0\in\{3d+3,\,3d+6\}$.
Applying the reduction of \cref{sec:dk-col-reconf} yields an equivalent instance of \textsc{$(d,k^\prime)$-\CR} with
\[
k^\prime \;=\; k_0+2+n(\lceil d/2\rceil-1),
\]
where $n:=|V(G)|$ is the number of vertices of the input list-instance graph $G$.
In particular, for fixed $d$ we have $k^\prime=\Theta(nd + k)$, and \cref{lem:dkcol-color-lb} shows that the linear dependence on $n$ is unavoidable for the present list-to-nonlist scheme.
\end{proof}

\subsection{The case $d=2$ and bipartiteness}\label{sec:d2-bipartite}

\cref{lem:dkcol-graph-type} shows that, although our reduction preserves planarity and $2$-degeneracy for all $d\ge 2$,
it preserves bipartiteness only when $d\ge 3$.
To briefly explain why such a limitation exists, we show that within the present list-to-nonlist interface for $d=2$,
any frozen list-enforcement gadget that remains planar and $2$-degenerate cannot also be bipartite.

\begin{lemma}\label{lem:d2-no-bipartite-frozen}
Fix $d=2$ and let $v$ be a vertex whose list $L(v)$ is a proper subset of $\{1,\dots,k\}$.
Consider the interface of \cref{sec:dk-col-reconf}, where $v$ is attached by a single edge to an anchor $c_v$ of a precolored frozen gadget $F_v$
whose coloring uses all colors.
If $F_v$ is required to be planar and $2$-degenerate, then $F_v$ cannot be bipartite.
\end{lemma}
\begin{proof}
We work in the case $d=2$ and use the frozen-graph construction from \cref{sec:dk-col-reconf}
(see also the explicit verification for $d=2$ in \cref{lem:frozengraph}).
For a vertex $v\in V(G)$, the gadget $F_v$ has vertex set
\[
V(F_v)=\{c_v,c_v^\star,w_{v,1},\dots,w_{v,k}\},
\]
and edges
\[
E(F_v)=\{c_vc_v^\star\}\ \cup\ \{c_v^\star w_{v,i}\mid 1\le i\le k\}\ \cup\ \{c_v w_{v,i}\mid i\notin L(v)\}.
\]
(The intended coloring $\alpha_v$ assigns pairwise distinct colors to all vertices of $F_v$; in particular,
$\alpha_v$ uses every color appearing in the gadget, which is what yields frozenness for $d=2$ as argued in \cref{lem:frozengraph}.)

We first note (for completeness) that $F_v$ is planar and $2$-degenerate.
Indeed, $F_v$ is a subgraph of $K_{2,k}$ on bipartition
$\{c_v,c_v^\star\}$ and $\{w_{v,1},\dots,w_{v,k}\}$, together with the extra edge $c_vc_v^\star$.
Since $K_{2,k}$ is planar for all $k$, adding the edge $c_vc_v^\star$ preserves planarity (it can be drawn inside a face incident with both
$c_v$ and $c_v^\star$ in a standard planar drawing of $K_{2,k}$).
For $2$-degeneracy, remove the vertices $w_{v,1},\dots,w_{v,k}$ first:
each $w_{v,i}$ has degree $1$ if $i\in L(v)$ and degree $2$ if $i\notin L(v)$.
After deleting all $w_{v,i}$, the remaining graph is the single edge $c_vc_v^\star$.
Hence every nonempty induced subgraph of $F_v$ has a vertex of degree at most $2$, so $F_v$ is $2$-degenerate.

We now show that $F_v$ cannot be bipartite when $L(v)\subsetneq\{1,\dots,k\}$.
Since $L(v)$ is a proper subset, there exists some index $i\in\{1,\dots,k\}\setminus L(v)$.
By the definition of $E(F_v)$, we then have all three edges
\[
c_vc_v^\star\in E(F_v),\qquad c_v^\star w_{v,i}\in E(F_v),\qquad c_v w_{v,i}\in E(F_v).
\]
Therefore the vertices $c_v,c_v^\star,w_{v,i}$ span a $3$-cycle
\[
c_v \;-\; c_v^\star \;-\; w_{v,i} \;-\; c_v.
\]
In particular, $F_v$ contains an odd cycle, and thus $F_v$ is not bipartite.

Consequently, under the present $d=2$ list-to-nonlist interface (where forbidding a color $i\notin L(v)$ is implemented by adding the edge
$c_vw_{v,i}$ while keeping $c_v^\star$ adjacent to all $w_{v,i}$ to ensure frozenness),
any instance with at least one vertex $v$ satisfying $L(v)\subsetneq\{1,\dots,k\}$ forces a non-bipartite frozen gadget $F_v$.
This is exactly why, unlike the case $d\ge 3$ (where $F_v$ is essentially a tree and hence bipartite), the current $d=2$ scheme does not
preserve bipartiteness.

\end{proof}

\section{\PSPACE-Completeness on Split Graphs and Chordal Graphs}\label{sec:split}

In this section, we focus on split graphs and chordal graphs.
First, for completeness, we revisit the proof by Bodlaender et al.~\cite{BodlaenderKTL04} showing that \textsc{$(2, k)$-Coloring} is \NP-complete.
Then, we prove that $(2,k)$-\CR is \PSPACE-complete.
The case where $d \geq 3$, in which the problem can be solved in polynomial time, will be addressed in \cref{sec:ddiameter}.
We also extend the reduction to show that $(d,k)$-\CR is \PSPACE-complete on chordal graphs for any even value $d \geq 2$.

\begin{lemma}[\cite{BodlaenderKTL04}]\label{thm:2kcol-split}
   \textsc{$(2, k)$-Coloring} on split graphs is \NP-complete.
\end{lemma}
\begin{proof}
   One can verify that \textsc{$(2, k)$-Coloring} is in \NP: \textsc{$k$-Coloring} is in \NP{} and \textsc{$(2, k)$-Coloring} on a
   graph $G$ can be converted to \textsc{$k$-Coloring} on its square graph $G^2$.
   To show that it is \NP-complete, we describe a reduction from the well-known \textsc{$\ell$-Coloring} problem on general graphs for
   $\ell \geq 3$, which asks whether a given graph $G$ has a proper $\ell$-coloring.
   Let $(G, \ell)$ be an instance of \textsc{$\ell$-Coloring} where $G = (V, E)$ is an arbitrary graph.
   We construct an instance $(G^\prime, k)$ of \textsc{$(2, k)$-Coloring} where $G^\prime$ is a split graph as follows.
   To construct $G^\prime$, we first add all vertices of $G$ to $G^\prime$.
   For each edge $e = xy \in E(G)$ where $x, y \in V(G)$, we add a new vertex $v_e$ in $V(G^\prime)$. Corresponding to each edge
   $e = xy \in E(G)$, we add the edges $xv_e$ and $yv_e$ to $E(G^\prime)$.
   Between all vertices $\bigcup_{e \in E(G)}\{v_e\}$ we form a clique in $G^\prime$.
   Finally, we set $k = m + \ell$ where $m = \lvert E(G)\rvert$.
   Our construction can be done in polynomial time.

   From the construction, $G^\prime$ is a split graph with $K = \bigcup_{e \in E(G)}\{v_e\}$ forming a clique and $S = V(G)$ forming
   an independent set.
   We now prove that $G$ has a proper $\ell$-coloring if and only if $G^\prime$ has a $(2, k)$-coloring where $k = m + \ell$.
   \begin{itemize}
       \item[($\Rightarrow$)] Suppose that $G$ has a proper $\ell$-coloring $\alpha$.
           We construct a $(2, k)$-coloring $\alpha^\prime$ of $G^\prime$ by setting $\alpha^\prime(u) = \alpha(u)$ for every
           $u \in V(G) = S$ and use $m$ new colors to color all $m$ vertices in $K$.
           From the construction, if $\dist_G(u, v) = 1$ for $u, v \in V(G) = S$ then $\dist_{G^\prime}(u, v) = 2$.
           Thus, $\alpha^\prime$ is a $(2, k)$-coloring of $G^\prime$.

       \item[($\Leftarrow$)] Suppose that $G^\prime$ has a $(2, k)$-coloring $\alpha^\prime$.
           We construct a coloring $\alpha$ of vertices of $G$ by setting $\alpha(u) = \alpha^\prime(u)$ for every $u \in S = V(G)$.
           Observe that any pair of vertices in $K$ have different colors.
           Therefore, $\alpha^\prime$ uses $k - \lvert K\rvert = k - m = \ell$ colors to color vertices in $S$.
           Additionally, if $uv \in E(G)$, we have $\dist_{G^\prime}(u, v) = 2$ and therefore
           $\alpha^\prime(u) \neq \alpha^\prime(v)$,
           which implies $\alpha(u) \neq \alpha(v)$.
           Thus, $\alpha$ is a proper $\ell$-coloring of $G$.
   \end{itemize}
\end{proof}

We are now ready to prove \cref{thm:2kcolreconf-split}.

\thmsplit*

\begin{proof}
   We reduce from $\ell$-\CR for $\ell \geq 4$.
   Let $(G, \alpha, \beta)$ be an instance of $\ell$-\CR where $\alpha$ and $\beta$ are two proper $\ell$-colorings of a graph $G$ having
   $n$ vertices and $m$ edges.

   First, we construct an instance $(\tilde{G}, \tilde{\alpha}, \tilde{\beta})$ of $\ell$-\CR where    $\tilde{\alpha}$ and $\tilde{\beta}$ are two proper $\ell$-colorings of a graph $\tilde{G}$ having
   $n + \ell$ vertices and $m + \ell(\ell-1)/2$ edges.
   $\tilde{G}$ is constructed by adding a new isolated clique $K_\ell$ on $\ell$ vertices to $G$.
   Let $\gamma$ be a fixed $\ell$-coloring of $K_\ell$ obtained by assigning for each vertex of $K_\ell$ a distinct color from $\{1, \dots, \ell\}$.
   We define $\tilde{\alpha}(v) = \alpha(v)$ if $v \in V(G)$ and $\tilde{\alpha}(v) = \gamma(v)$ if $v \in V(K_\ell)$.
   Similarly, we define $\tilde{\beta}(v) = \beta(v)$ if $v \in V(G)$ and $\tilde{\beta}(v) = \gamma(v)$ if $v \in V(K_\ell)$.
   Our construction can be done in polynomial time.
   As no vertex in $K_\ell$ can be recolored, it follows that any reconfiguration sequence between $\alpha$ and $\beta$ in $G$ can be regarded as a reconfiguration sequence between $\tilde{\alpha}$ and $\tilde{\beta}$ in $\tilde{G}$, and vice versa.
   Thus, $(G, \alpha, \beta)$ is a yes-instance if and only if $(\tilde{G}, \tilde{\alpha}, \tilde{\beta})$ is a yes-instance.

   Next, we construct an instance $(G^\prime, \alpha^\prime, \beta^\prime)$ of $(2, k)$-\CR where $k = \ell + m + \ell(\ell-1)/2 = m + \ell(\ell+1)/2$ and
   $\alpha^\prime$ and $\beta^\prime$ are $(2, k)$-colorings of $G^\prime$.
   Indeed, we use the same construction as in the first proof of \cref{thm:2kcolreconf-split} to construct $G^\prime$, except now that we use $(\tilde{G}, \tilde{\alpha}, \tilde{\beta})$ as the starting instance instead of $(G, \alpha, \beta)$.
   Again, the construction can be done in polynomial time.
   To see that it is correct, we note that unlike in the first proof, in this case, as no vertex of $K_\ell$ can be recolored in $G^\prime$, it follows that no vertex in $K$ can be recolored by a color in $C_S = \{1, \dots, \ell\}$. Thus, one can only recolor vertices of $V(G) = V(\tilde{G}) - V(K_\ell)$ in $G^\prime$.
   In other words, any reconfiguration sequence between $\alpha'$ and $\beta'$ in $G'$ can be regarded as a reconfiguration sequence between $\tilde{\alpha}$ and $\tilde{\beta}$ in $\tilde{G}$, and vice versa.
   Thus, $(G', \alpha', \beta')$ is a yes-instance if and only if $(\tilde{G}, \tilde{\alpha}, \tilde{\beta})$ is a yes-instance.
   This completes our proof.
\end{proof}

Indeed, we can further extend the proof of \cref{thm:2kcolreconf-split} to show the \PSPACE-completeness of $(d, k)$-\CR for even values of $d \geq 2$ on chordal graphs (\cref{thm:chordal}).

\thmchordal* 

\begin{proof}
   We reduce from $\ell$-\CR for $\ell \geq 4$.
   Let $(G, \alpha, \beta)$ be an instance of $\ell$-\CR where $\alpha$ and $\beta$ are two proper $\ell$-colorings of a graph $G$ having
   $n$ vertices and $m$ edges.

   Let $(\tilde{G}, \tilde{\alpha}, \tilde{\beta})$ be an instance of $\ell$-\CR constructed as described in the proof of \cref{thm:2kcolreconf-split}.
   (That is, the graph $\tilde{G}$ contains $G$ and an isolated clique $K_\ell$.)
   $\tilde{G}$ is a graph having $n + \ell$ vertices and $m + \ell(\ell-1)/2$ edges, and both $\tilde{\alpha}$ and $\tilde{\beta}$ are proper $\ell$-colorings of $\tilde{G}$.

   We construct an instance $(G', \alpha', \beta')$ of $(d, k)$-\CR on chordal graphs where $d \geq 2$ is even and $k = m + \ell(\ell+1)/2 + (n+\ell)(d-2)/2$ as follows.
   In particular, when $d = 2$, our construction is exactly the same as the one in the proof of \cref{thm:2kcolreconf-split}.

   First, for each vertex $v \in V(\tilde{G})$, we add two new vertices $v^1$ and $v^2$ of $v$ to $V(G')$ along with a path $P(v^1, v^2)$ of length exactly $(d-2)/2$ between $v^1$ and $v^2$.
   For each edge $e = xy \in E(\tilde{G})$ where $x, y \in V(\tilde{G})$, we add a new vertex $v_e$ in $V(G')$.
   Corresponding to each edge $e = xy \in E(\tilde{G})$, we add the edges $x^1v_e$ and $y^1v_e$ to $E(G')$.
   Between all vertices $\bigcup_{e \in E(\tilde{G})}\{v_e\}$ we form a clique in $G'$.
   Clearly, $G'$ can be seen as a graph obtained by attaching disjoint paths of length $(d-2)/2$ to a split graph and thus it is a chordal graph.
   
   Next, we set $k = m + \ell(\ell+1)/2 + (n+\ell)(d-2)/2$.
   Two $(d, k)$-colorings $\alpha'$ and $\beta'$ of $G'$ are defined as follows.
   Suppose that $C = C_S \cup C_K$ is the set of $k$ colors, where $C_S = \{1, \dots, \ell\}$, $C_K = \{\ell+1, \dots, \ell+m+\ell(\ell-1)/2+(n+\ell)(d-2)/2\}$, and the colors in $C_S$ are used in $\tilde{\alpha}$ and $\tilde{\beta}$ to color vertices of $\tilde{G}$.
   We set $\alpha'(v^2) = \tilde{\alpha}(v)$ and $\beta'(v^2) = \tilde{\beta}(v)$ for each $v \in V(\tilde{G})$.
   For each remaining uncolored vertex $w \in V(G')$, we color $w$ in both $\alpha'$ and $\beta'$ by the same color (i.e., $\alpha'(w) = \beta'(w)$) that is selected from some unused colors in $C_K$.
   One can verify that $\alpha'$ and $\beta'$ are indeed $(d, k)$-colorings of $G'$.
   Our construction can be done in polynomial time.

   One can also verify that only vertices $v^2$ which corresponds to $v \in V(G)$ can be recolored in $G'$ and the colors must come from $C_S$.
   Again, using a similar argument as in the proof of \cref{thm:2kcolreconf-split}, we can show that $(G, \alpha, \beta)$ is a yes-instance if and only if $(G', \alpha', \beta')$ is a yes-instance.
   This completes our proof.
\end{proof}

\section{Some Polynomial-Time Algorithms}\label{sec:algorithms}

In this section, we show that $(d, k)$-\CR can be solved in polynomial time on graphs of diameter at most $d$ (\cref{sec:ddiameter}) and on paths (\cref{sec:paths}).

\subsection{Graphs of Diameter At Most $d$}\label{sec:ddiameter}

This section is devoted to proving \cref{thm:diam-d}.

\thmdiamd*

   \begin{proof}
       Let $G$ be a $(d, k)$-colorable graph on $n$ vertices whose diameter is at most $d$.
       Since $G$ has diameter at most $d$, for any $(d, k)$-coloring $\alpha$, we have $\alpha(u) \neq \alpha(v)$ for every
       $u, v \in V(G)$. Thus, $n \leq k$.

       Now, if $n = k$, any instance $(G, \alpha, \beta)$ of $(d, k)$-\CR on $G$ is a no-instance as no vertex can be recolored.
       Otherwise, we will prove later that any instance $(G, \alpha, \beta)$ of $(d, k)$-\CR on $G$ is a yes-instance.
       The described procedure allows us to solve $(d, k)$-\CR on $G$ simply by comparing $n$ and $k$, which takes $O(\log n + \log k)$ time.

       It remains to show that when $n < k$, for any instance $(G, \alpha, \beta)$, one can construct a reconfiguration sequence between $\alpha$ and $\beta$.
       Observe that one can recolor any vertex with some extra color
       that does not appear in the
       current $(d, k)$-coloring (such a color always exists because $n < k$).
       This observation allows us to construct any target $(d, k)$-coloring $\beta$ from some source $(d, k)$-coloring $\alpha$
       using \cref{alg:ddiameter}.
       Since $n < k$, each step correctly produces a new $(d, k)$-coloring of $G$.
       It is also clear from the description that the construction runs in $O(n)$ time as we get closer to the coloring $\beta$ one color
       at a time.

       \begin{algorithm}
           \begin{algorithmic}[1]
               \Repeat
               \State Pick a vertex $v$ where $\beta(v)$ is an extra color that is not used in the current coloring.
               \If{such $v$ cannot be found}
               \LComment{$\beta$ is indeed obtained by permuting the colors used by the current coloring on the set $V(G)$}
               \State Arbitrarily pick any vertex $w$ and recolor it by any extra color.\label[line]{alg:ddiamter:pick-arbitrary}
               \LComment{Such an extra color always exists as $n < k$. In the next iteration, there
                   exists a vertex $v$ whose $\beta(v)$---the previous color of $w$ becomes an extra color}
               \EndIf
               \State Recolor $v$ by the color $\beta(v)$.\label[line]{alg:ddiamter:recolor}
               \Until{$\beta$ is obtained}
           \end{algorithmic}
           \caption{$d$-diameter algorithm: $n < k$\label{alg:ddiameter}}
       \end{algorithm}

   \end{proof}

   Recall that the diameter of a component of any split graph is at most $3$.
   The following corollary is straightforward.

   \begin{corollary}\label{cor:split}
       $(d, k)$-\CR can be solved in polynomial time on split graphs for any fixed integers $d \geq 3$ and $k \geq d+1$.
   \end{corollary}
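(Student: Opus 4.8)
The plan is to obtain this as a direct specialization of \cref{prop:diam-d}, whose hypothesis is exactly ``diameter at most $d$''. Since the paper works throughout with connected graphs, it suffices to handle a single connected split graph $G$; the general case follows by running the procedure of \cref{prop:diam-d} on each connected component separately and declaring a yes-instance iff every component is a yes-instance (distinct components reconfigure independently). Note also that in any instance $(G, \alpha, \beta)$ of $(d,k)$-\CR we are handed two $(d,k)$-colorings, so $G$ is automatically $(d,k)$-colorable and the hypothesis of \cref{prop:diam-d} is satisfied.

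The substantive step is the classical structural fact that a connected split graph has diameter at most $3$. I would write $V(G) = K \cup S$, where $K$ induces a clique and $S$ is an independent set, and bound $\dist_G(u,v)$ by cases. If $u, v \in K$ they are adjacent, so $\dist_G(u,v) = 1$. If exactly one of them, say $u$, lies in $S$, then since $S$ is independent every neighbor of $u$ lies in $K$, and connectedness guarantees $u$ has at least one such neighbor; that neighbor is adjacent to (or equal to) $v$, giving $\dist_G(u,v) \leq 2$. Finally, if both $u, v \in S$, each has a neighbor $k_u, k_v \in K$, and since $k_u k_v$ is an edge (or $k_u = k_v$) we obtain $\dist_G(u,v) \leq 3$. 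Hence the diameter of $G$ is at most $3$.

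Because $d \geq 3$, this yields diameter at most $3 \leq d$, so $G$ meets the hypothesis of \cref{prop:diam-d}. That theorem already establishes that $(d,k)$-\CR is decidable in $O(1)$ time on graphs of diameter at most $d$, and that a reconfiguration sequence for a yes-instance can be constructed in linear time; both bounds are polynomial, which is all the corollary claims. Applying this component-wise finishes the argument.

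There is no serious obstacle here: the corollary is essentially a one-line consequence once the diameter bound is in hand. The only point requiring any care---and it is entirely routine---is the degenerate case of an isolated or neighborless vertex of $S$, which is excluded precisely by connectedness; this is why I phrase the reduction per connected component rather than for an arbitrary (possibly disconnected) split graph.
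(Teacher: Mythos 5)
Your proposal matches the paper's argument exactly: the paper derives the corollary from \cref{prop:diam-d} by recalling that every component of a split graph has diameter at most $3 \leq d$, which is precisely your route (the paper simply cites the diameter bound as known, whereas you spell out the short case analysis proving it). The component-wise reduction and the observation that the given colorings certify $(d,k)$-colorability are both sound and consistent with the paper's treatment.
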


\subsection{Paths}\label{sec:paths}

In this section, we prove \cref{thm:paths}.
We assume that a path $P$ on $n$ vertices is partitioned into $\lceil n/(d+1) \rceil$ disjoint
blocks of $d+1$
consecutive vertices (except possibly the last block, which can have less than $d+1$ vertices).
We denote by $v_{i, j}$ the $j$-th vertex in the $i$-th block of $P$ if it exists, for $1 \leq i \leq \lceil n/(d+1) \rceil$ and
$1 \leq j \leq d+1$.
In particular, $v_{1,1}$ is always an endpoint of $P$. Notice that $v_{i,1}$ and $v_{i,d+1}$ have distance $d$.

\begin{lemma}\label{lem:path:compact_path}
   Let $\alpha$ be any $(d, d+1)$-coloring of an $n$-vertex path $P$. Then, $\alpha(v_{i,j}) = \alpha(v_{i^\prime, j})$, where
   $1 \leq i < i^\prime \leq \lceil n/(d+1) \rceil$.
\end{lemma}
\begin{proof}
   It suffices to show that $\alpha(v_{i, j}) = \alpha(v_{i+1, j})$ for every $1 \leq i \leq \lceil n/(d+1) \rceil - 1$ and
   $1 \leq j \leq d+1$ such that $v_{i+1, j}$ exists.
   (If $i < \lceil n/(d+1) \rceil - 1$, $v_{i+1, j}$ always exists. If $i = \lceil n/(d+1) \rceil - 1$, $v_{i+1, j}$ may or
   may not exist.)
   Let $Q$ be the path between $v_{i, j}$ and $v_{i+1, j}$.
   Let $u$ be the neighbor of $v_{i, j}$ in $Q$.
   Similarly, let $v$ be that of $v_{i+1, j}$.
   By definition, the $uv$-path in $P$ has length exactly $d - 1$, and therefore its vertices are colored by exactly $d$ colors.
   Since at most $d + 1$ colors are available and $\alpha$ is a $(d, d+1)$-coloring, $\alpha(v_{i+1, j})$ cannot have any of the
   colors that were assigned to the $uv$-path.
   Hence, we have $\alpha(v_{i, j}) = \alpha(v_{i+1, j})$.
\end{proof}

From \cref{lem:path:compact_path}, it follows that if exactly $d + 1$ colors are available, one cannot recolor any vertex on a path
$P$. We have the following corollary.
\begin{corollary}\label{cor:path:no-sequence}
   Let $P$ be a path on $n$ vertices. Then any instance $(P, \alpha, \beta)$ with $\alpha\neq \beta$ of
   $(d, d+1)$-\CR is a no-instance.
\end{corollary}
\begin{proof}
   Let $\alpha$ be a $(d, d+1)$-coloring of $P$.
   It suffices to show that no vertex in $P$ can be recolored.
   Suppose to the contrary that there exists $v = v_{i, j}$ such that one can obtain a $(d, d+1)$-coloring $\alpha^\prime$ of $P$ from
   $\alpha$ by recoloring $v$, where $1 \leq i \leq \lceil n/(d+1) \rceil$ and $1 \leq j \leq d+1$.
   Since $P$ has diameter more than $d$, the first block of $P$ always has exactly $d+1$ vertices. None of them can be recolored, so
   $v \neq v_{1, j}$.
   On the other hand, by \cref{lem:path:compact_path} we have,
   $\alpha^\prime(v_{1, j}) = \alpha^\prime(v_{i, j}) = \alpha^\prime(v) \neq \alpha(v) = \alpha(v_{i, j}) = \alpha(v_{1, j})$.
   This implies that if we recolor $v_{i, j}$ we are also forced to recolor $v_{1, j}$.
   Thus, we have $v = v_{1, j}$, a contradiction.
\end{proof}

Next, using the two subsequent lemmas, we show that one extra color is enough to recolor the graph. First, \cref{lem:recol-to-d+1}
says that we can transform any $(d, k)$-coloring, where $k \geq d+2$ to some $(d, d+1)$-coloring.
Then, \cref{lem:path:sequence}
shows that if both source and target colorings are $(d,d+1)$-colorings and we have $k\geq d+2$ colors, we can recolor the
graph, thereby completing the picture.

\begin{lemma}\label{lem:recol-to-d+1}
   Let $P$ be a path on $n$ vertices.
   Let $\alpha$ be a $(d, k)$-coloring of $P$ for $k \geq d+2$.
   Then, there exists a $(d, d+1)$-coloring $\beta$ of $P$ such that $(P, \alpha, \beta)$ is a yes-instance of $(d, k)$-\CR.
   Moreover, one can construct in $O(n^2)$ time a reconfiguration sequence between $\alpha$ and $\beta$.
\end{lemma}
\begin{proof}
   \cref{algo:recol-to-d+1} describes how to construct a sequence $\mathcal{S}$ between $\alpha$ and some $(d, d+1)$-coloring $\beta$
   of
   $P$. Informally, the algorithm starts by using the colors of the second block to recolor vertices of the first block. Then, in each
   iteration of the algorithm (which corresponds to the outer \textbf{for}
   loop starting at Line 2   ), the algorithm uses the colors of the $i$th block
   to recolor vertices of the blocks $i-1$ to $1$ in that order. So, each iteration of the algorithm takes at most $O(n)$ time and,
   hence,
   \cref{algo:recol-to-d+1} runs in $O(n^2)$ time. Each vertex is recolored at most $O(\lceil n / (d+1) \rceil) $ times.

   Next, we show the correctness of our algorithm.
   We prove using induction on the length (i.e., the number of recoloring steps) $\ell \geq 1$ of $\mathcal{S}$ that $\mathcal{S}$ is
   indeed a reconfiguration sequence from $\alpha$ to $\beta$.
   Let $t \in \{1, \dots, d+1\}$ be the number of vertices in the last block of $P$, which may be less than $d+1$.
   Once $\mathcal{S}$ is a reconfiguration sequence, it follows directly from the algorithm that the resulting coloring $\beta$ is a
   $(d, d+1)$-coloring of $P$: In $\beta$, every block of $P$ will have its first $t$ vertices colored by the colors used in $\alpha$
   for
   all $t$ vertices in the last block and its last $d + 1 - t$ vertices colored by the colors used in $\alpha$ for the last
   $d + 1 - t$ vertices in the second-to-last block.
   If the last block has $t = d+1$ vertices then $d + 1 - t = 0$ and thus all colors used in $\beta$ are used by $\alpha$ for
   vertices in the last block of $P$.
   \begin{algorithm}
       \begin{algorithmic}[1]
           \Require $(P, \alpha)$ where $\alpha$ is a $(d, k)$-coloring of a path $P$ for some $k \geq d+2$
           \Ensure A reconfiguration sequence $\mathcal{S}$ between $\alpha$ and some $(d, d+1)$-coloring $\beta$ of $P$
           \State $\mathcal{S} \gets \langle \alpha \rangle$
           \For{$i$ from $2$ to $\lceil n/(d+1) \rceil$}\label{alg:recol-to-d+1:for}
           \For{$j$ from $1$ to $d+1$}
           \If{$v_{i, j}$ exists}
           \For{$p$ from $i - 1$ to $1$}
           \State $\alpha(v_{p, j}) \gets \alpha(v_{i, j})$ \Comment{This can also be seen as recoloring $v_{p, j}$ by the color
               $\alpha(v_{p+1, j})$}
           \State $\mathcal{S} \gets \mathcal{S} \cup \langle \alpha \rangle$
           \EndFor
           \EndIf
           \EndFor
           \EndFor
           \Return{$\mathcal{S}$}
       \end{algorithmic}
       \caption{Construction of a reconfiguration sequence that transforms any $(d, k)$-coloring where $k \geq d+2$ into a
           $(d, d+1)$-coloring in a path\label{algo:recol-to-d+1}}

   \end{algorithm}

   For the base case $\ell = 1$, the sequence $\mathcal{S} = \langle \alpha, \alpha_1 \rangle$ where $\alpha_1$ is obtained from
   $\alpha$
   by recoloring $v_{1, 1}$ with the color $\alpha(v_{2, 1})$ is indeed a reconfiguration sequence: Since $\alpha$ is a
   $(d, k)$-coloring ($k \geq d + 2$) of $P$, no vertex in the path between $v_{1,  1}$ and $v_{2, 1}$ is colored by
   $\alpha(v_{2, 1})$. Since, the distance between $v_{1, 1}$ and $v_{2, 1}$ is exactly $d+1$, they can share the same color
   $\alpha(v_{2, 1})$.

   Next, assume that the sequence $\mathcal{S}^\prime = \langle \alpha, \alpha_1, \dots, \alpha_\ell \rangle$ obtained from
   \cref{algo:recol-to-d+1} is indeed a reconfiguration sequence in $P$.
   We claim that the sequence $\mathcal{S} = \langle \alpha, \alpha_1, \dots, \alpha_\ell, \alpha_{\ell + 1} \rangle$ is also a
   reconfiguration sequence in $P$. Suppose to the contrary that it is not.
   From the construction, there exist two indices $i$ and $j$ such that $\alpha_{\ell+1}$ is obtained from $\alpha_\ell$ by recoloring
   $v_{i, j}$ with the color $\alpha_{\ell}(v_{i+1,  j})$.
   Since $\mathcal{S}^\prime$ is a reconfiguration sequence but $\mathcal{S}$ is not, the above recoloring step is not valid, i.e.,
   there is a vertex $w \in V(P)$ such
   that $\alpha_\ell(w) = \alpha_\ell(v_{i+1, j})$, $w \neq v_{i, j}$, and the distance between $w$ and $v_{i, j}$ is at most $d$.
   By the distance constraint and the assumption that $\alpha_{\ell}$ is a $(d, k)$-coloring, $w$ is on the path between $v_{1, 1}$
   and $v_{i+1, j}$.
   (Recall that the distance between $v_{i, j}$ and $v_{i+1, j}$ is exactly $d+1$.)
   Since $\alpha_\ell(w) = \alpha_\ell(v_{i+1, j})$, $w$ is not in the $(i+1)$-th block.
   Thus, $w$ is in either the $i$-th block or the $(i-1)$-th one.
   We complete our proof by showing that in each case, a contradiction happens.
   \begin{itemize}
       \item We consider the case that $w$ is in the $i$-th block, say $w = v_{i, j^\prime}$ for some
               $j^\prime \in \{1, \dots, d+1\} \setminus \{j\}$.
               If $j^\prime > j$ then $w = v_{i, j^\prime}$ is on the path between $v_{i, j}$ and $v_{i+1, j}$.
               Recall that the path between $v_{i, j}$ and $v_{i+1, j}$ has length exactly $d+1$.
               So if $w$ is on that path and note that $w \neq v_{i, j}$, the distance between $w$ and
               $v_{i+1, j}$ is at most $d$.
               Since $\alpha_\ell$ is a $(d, k)$-coloring, we must have $\alpha_\ell(w) \neq \alpha_\ell(v_{i+1, j})$, a contradiction.
               On the other hand, if $j^\prime < j$ then by the inductive hypothesis,
               $\alpha_\ell(w) = \alpha_\ell(v_{i, j^\prime}) = \alpha_\ell(v_{i+1, j^\prime}) = \alpha_\ell(v_{i+1, j})$ (follows from
               construction of \cref{algo:recol-to-d+1}) which contradicts the assumption that $\alpha_\ell$ is a
               $(d, k)$-coloring of $P$.

       \item We now consider the case that $w$ is in the $(i-1)$-th block, say $w = v_{i-1, j^\prime}$ for some
               $j^\prime \in \{1, \dots, d+1\}$.
               Since the distance between $w = v_{i-1, j^\prime}$ and $v_{i, j}$ is at most $d$, we have $j^\prime > j$.
               By the inductive hypothesis, we have $\alpha_\ell(w) = \alpha_\ell(v_{i-1, j^\prime}) = \alpha_\ell(v_{i, j^\prime})$
               (follows from construction of \cref{algo:recol-to-d+1}).
               On the other hand, since $j^\prime > j$, the vertex $v_{i, j^\prime}$ is on the path between $v_{i, j}$ and
               $v_{i+1, j}$, and
               thus $\alpha_\ell(w) = \alpha_\ell(v_{i, j^\prime}) \neq \alpha_\ell(v_{i+1, j})$, a contradiction.
   \end{itemize}
\end{proof}

\begin{lemma}\label{lem:path:sequence}
   Let $P$ be a path on $n$ vertices.
   Then any instance $(P, \alpha, \beta)$ of $(d, k)$-\CR where $k \geq d+2$ and both $\alpha$ and $\beta$ are $(d, d+1)$-colorings
   of $P$ is a yes-instance.
   In particular, there exists a linear-time algorithm that constructs a reconfiguration sequence between $\alpha$ and $\beta$.
\end{lemma}
\begin{proof}
   A slight modification of \cref{alg:ddiameter} allows us to construct a reconfiguration sequence between $\alpha$ and
   $\beta$ in $O(n)$ time.
   Recall that at least $d + 2$ colors can be used.
   We apply \cref{alg:ddiameter} to the first block of $d + 1$ consecutive vertices $v_{1, 1}, \dots, v_{1, j}, \dots, v_{1, d+1}$ in
   $P$ with only one small change:
   when a vertex $v_{1, j}$ is considered for recoloring 
   (in Lines 5 and 7   of \cref{alg:ddiameter}), instead of just recoloring $v_{1, j}$, we also recolor the $j$-th vertex (if it
   exists) in every other block, one
   vertex at a time. This can be done correctly, as when we are recoloring $v_{1, j}$ using an extra color, that extra color is not
   present
   in the current coloring. So we can recolor the $j$-th vertices of all other blocks as well with that extra color.
   From \cref{thm:diam-d} and \cref{lem:path:compact_path}, it follows that this modified algorithm always correctly produces a
   $(d, k)$-coloring of $P$ at each step, and
   the algorithm runs in $O(n)$ time.
\end{proof}

Combining \cref{cor:path:no-sequence} and \cref{lem:recol-to-d+1,lem:path:sequence}, we are now ready to prove \cref{thm:paths}.

\thmpaths*

\begin{proof}
   Let $(P, \alpha, \beta)$ be an instance of $(d, k)$-\CR on paths.
   If $k = d + 1$, return ``no'' (Corollary  \ref{cor:path:no-sequence}).
   Otherwise, ($k \geq d+2$), return ``yes''.
   The algorithm simply compares $k$ and $d + 1$ and therefore takes $O(\log k + \log d)$ time.

   It remains to describe how to construct a reconfiguration sequence between $\alpha$ and $\beta$ in a yes-instance.
   If $\alpha$ (resp.\ $\beta$) is not a $(d, d+1)$-coloring of $P$, use \cref{lem:recol-to-d+1} to reconfigure it into some
   $(d, d+1)$-coloring $\alpha^\prime$ (resp.\ $\beta^\prime$).
   Otherwise, just simply assign $\alpha^\prime \gets \alpha$ (resp.\ $\beta^\prime \gets \beta$).
   Use \cref{lem:path:sequence} to construct a reconfiguration sequence between $\alpha^\prime$ and $\beta^\prime$.
   Combining these sequences gives us a reconfiguration sequence between $\alpha$ and $\beta$.
\end{proof}

\section{Concluding Remarks}\label{sec:conclusion}

We proved $(d, k)$-\CR is \PSPACE-complete for $d \geq 2$ on graphs that are planar and $2$-degenerate (and also bipartite when $d \geq 3$). 
Additionally, on split graphs, there is an interesting dichotomy: the problem is $\mathsf{PSPACE}$-complete when $d = 2$ and $k$ is large but can be solved efficiently when $d \geq 3$ and $k \geq d+1$. 
For chordal graphs, we show that the problem is $\mathsf{PSPACE}$-complete for even values of $d \geq 2$. Finally, we design a quadratic-time algorithm to solve the problem on paths for any $d \geq 2$ and $k \geq d+1$.
Following the natural hierarchy of graph degeneracy, a logical next open direction is to examine $1$-degenerate graphs (forests). 
Notably, the complexity of $(d, k)$-\CR ($d \geq 2$) remains unresolved even for trees.
We have only partially addressed this question by developing a quadratic-time algorithm for paths (a subclass of trees).

\begin{credits}
\subsubsection{\ackname}
We thank the anonymous reviewers for their useful comments.
A part of this work was done when Niranka was at RIMS, Kyoto University and Duc was at the Vietnam Institute for Advanced Study in Mathematics (VIASM).
Duc A. Hoang's research was partially supported by the Vietnam National University, Hanoi under the project QG.25.07 ``A study on reconfiguration problems from algorithmic and graph-theoretic perspectives''.
\end{credits}

\bibliographystyle{splncs04}
\bibliography{refs.bib}

\end{document}